\newdefinition{assumption}{ {Assumption}}
\newtheorem{theorem}{ {Theorem}}
\newtheorem{lemma}{ {Lemma}}
\newdefinition{definition}{ {Definition}}
\newdefinition{remark}{ {Remark}}
\newdefinition{example}{ {Example}}
\newproof{proof}{Proof}
\begin{document}

\begin{frontmatter}



\title{Exponential convergence of distributed optimization for heterogeneous linear multi-agent systems \tnoteref{t1}}
\tnotetext[t1]{This research was supported by the Shanghai Municipal Commission of Science and Technology No. 19511132100, 19511132101, the National Natural Science Foundation of China under Grant 62003243, and National Key R\&D Program of China, No. 2018YFE0105000, 2018YFB1305304.}

\author[1,2]{Li Li}
\ead{lili@tongji.edu.cn}
\author[1]{Yang Yu}
\ead{1910639@tongji.edu.cn}
\author[1,3]{Xiuxian Li\corref{cor1}}
\ead{xli@tongji.edu.cn}
\author[4]{Lihua Xie
	}
\ead{elhxie@ntu.edu.sg}

\cortext[cor1]{Corresponding author.}
\address[1]{Department of Control Science and Engineering, College of Electronics and Information Engineering, Tongji University, Shanghai 201804, China}
\address[2]{Shanghai Institute of Intelligent Science and Technology, Tongji University, Shanghai 201804, China}
\address[3]{Institute for Advanced Study, Tongji University, Shanghai, 200092, China}
\address[4]{School of Electrical and Electronic Engineering, Nanyang Technological University, Singapore 639798}


\begin{abstract}
In this work we study a distributed optimal output consensus problem for heterogeneous linear multi-agent systems where the agents aim to reach consensus with the purpose of minimizing the sum of private convex costs. Based on output feedback, a fully distributed control law is proposed by using the proportional-integral (PI) control technique. For strongly convex cost functions with Lipschitz gradients, the designed controller can achieve convergence exponentially in an undirected and connected network. 
Furthermore, to remove the requirement of continuous communications, the proposed control law is then extended to periodic and event-triggered communication schemes, which also achieve convergence exponentially. Two simulation examples are given to verify the proposed control algorithms.
\end{abstract}



\begin{keyword}
Distributed convex optimization \sep multi-agent systems \sep event-triggered communication


\end{keyword}

\end{frontmatter}


\section{Introduction}\label{int}
In recent few decades, distributed optimization has been attracting more and more research interests because of its wide applications in multi-agent systems, smart grids, machine learning and so on. Specifically, the purpose of each node in a network is to minimize the sum of private costs under constraints only by exchanging local information with neighbors. Various distributed algorithms have been proposed in this field \citep{nedic2009,yang2019,zhang2017,tang2019,zhao2017,li20191,li20192}.

In practical physical systems such as AGVs and UAVs, the implementation of distributed strategies must consider the dynamics of each agent. Therefore, in recent years, interest has been attracted increasingly in distributed optimization combined with physical systems. This problem requires each of a group of continuous-time physical systems to achieve the best performance. Generally, there are two ways to solve such problems. The first one is based on a “separative” way: treating it as a standalone distributed optimization problem for cost functions and simultaneously tracking the optimized variables for complex systems. In \citet{zhang2017} where the system dynamics are Euler-Lagrange systems, two distributed algorithms are developed for the case without parametric uncertainties and the case with parametric uncertainties respectively. For high-order multi-agent systems, \citet{tang2019} firstly makes use of a virtual first-order optimizer to generate an optimal signal and then uses an underlying controller to make the system track this signal. This method is relatively simple due to the mature research of two dimensions, but it is an open-loop structure. The second one focuses on designing an integrated control law. Reference \citet{zhao2017} studies the optimal consensus problem of linear systems, but it requires the local objective function to be of a certain special structure, which only works in limited situations. For heterogeneous/homogeneous linear multi-agent systems, asymptotically stable and fully distributed controllers are designed in \citet{li2020,zhang2020}. This kind of structural design is more complicated, but the integrated design makes it possible to optimize the performance the converged state of the system.

In continuous-time distributed optimization algorithms, it is necessary for agents to exchange information continuously, which is unrealistic in actual physical systems. In order to avoid continuous communication and reduce communication overhead, both periodic and event-triggered mechanisms \citep{ding2018,ge2020} are introduced for first-order integrator\citep{kia2015,wu2020}, second-order integrator\citep{tran2018}, linear multi-agent systems\citep{li2020,lina2017,lxx2019} and so on. Each agent communicates with its neighbors periodically or only after reaching certain trigger conditions. For the event-triggered communication mechanism, one of the key points is to avoid Zeno behavior, meaning that an infinite number of events occur in a finite time. One of the methods is to design a lower bound of the communication interval as in \citet{kia2015}, but it can only ensure that the algorithm converges to a neighborhood of the optimal solution. Reference \citet{yu2020} studies the Zeno behavior of the first-order multi-agent systems and provides sufficient conditions for its existence in a finite time consensus.

Aiming at the optimal output consensus problem of heterogeneous linear multi-agent systems, this paper  designs a proportional-integral (PI) controller to solve the problem, in which the proportional term drives all the agents to the consensus space and the integral term eliminates errors\citep{qiu2019}. The main contributions of this paper are as follows.
\begin{itemize}	
	\item[1)] Through the feedback combination of own state and neighbor output information, a PI based control law is designed, which is shown to have an exponential convergence rate. In comparison, the most related work \citep{li2020} only gives the result of asymptotic convergence, and is based on a stronger assumption than that in this paper.
	
	\item[2)] To reduce the communication overhead, this paper further introduces periodic communication and event-triggered communication mechanisms for the above-proposed algorithm, which are both proven to guarantee exponential convergence. Besides the established exponential rate here, compared with the gradually decreasing communication interval in \citet{li2020}, the proposed algorithm can clearly give a lower bound of the communication interval, thus excluding the Zeno behavior. 
\end{itemize}

The overall structure of this paper is as follows. Preliminaries are given in Section \ref{pre}. In Section \ref{pro}, the heterogeneous multi-agent systems under investigation are described mathematically, the optimal output consensus problem is defined and some useful lemmas are given. Following that, three control laws with continuous, periodic, and event-triggered communication are proposed respectively and their exponential convergence is established in Section \ref{mai}. Then two simulation examples are provided to verify the effectiveness of the algorithms in Section \ref{sim}. Finally, conclusions and future works are discussed in Section \ref{res}.

\section{Preliminaries}\label{pre}

\subsection{Notations}
Let $\mathbb{R}$, $\mathbb{R}^n$, $\mathbb{R}^{m\times n}$ be the sets of real numbers, real vectors of dimension $n$ and real matrices of dimension $m\times n$, respectively. $I_n$ denotes the $n$-dimensional identity matrix. $\boldsymbol1_n$ and $\boldsymbol0_n$ denote $n$-dimensional all-one and all-zero column vectors, respectively. For a matrix $A\in\mathbb{R}^{m \times n}$, $A^\top $ is its transpose, and  $diag(A_1,\dots,A_n)=blkdiag(A1; \dots; An)$ denotes a block diagonal matrix with diagonal blocks of $A_1$, $\dots$, $A_n$ . $col(x_1,\dots,x_n) = (x_1^\top ,\dots,x_n^\top )^\top $ is a column vector by stacking vectors $x_1,\dots,x_n$. $\|A\|$ and $\|x\|$ are the induced 2-norm of matrix $A$ and the Euclidean norm of vector $x$ respectively. $A\otimes B$ represents the Kronecker product of matrices A and B. 

\subsection{Graph Theory}
A communication network of $N$ agents is modeled by an undirected graph $\mathcal{G} = (\mathcal{V}, \mathcal{E}, \mathcal{A})$, where $\mathcal{V} = \{v_1, \dots, v_N\}$ is a node set, $\mathcal{E} \in \mathcal{V} \times \mathcal{V}$ is an edge set and $\mathcal{A}= [a_{ij}]\in \mathbb{R}^{N \times N}$ is the adjacency matrix. If information exchange can occur between $v_i$ and $v_j$, then $(v_i,v_j) \in \mathcal{E}$. If there exists a path from any node to any other node in $\mathcal{V}$, then $\mathcal{G}$ is called connected, otherwise disconnected. The out-degree of node $v_i$ is denoted by $d_i^{out} = \sum_{j=1}^{N}a_{ji}$. Denote $L=D^{out}-A$ as the Laplacian matrix of $\mathcal{G}$, where $D^{out}=diag(d_1^{out}, \dots, d_N^{out})$ is the out-degree matrix of $\mathcal{G}$.

\begin{lemma}[\citealp{tu}]\label{lemma1}
	If $\mathcal{G}$ is undirected and connected, all eigenvalues  of $L$ are real and except for a single eigenvalue $0$, the rest are positive numbers, denoted as
	$$0=\lambda_1<\lambda_2\le \dots \le \lambda_N.$$
\end{lemma}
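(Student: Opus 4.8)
The plan is to exploit the symmetry of the Laplacian together with a quadratic-form argument. First I would note that because $\mathcal{G}$ is undirected, the adjacency matrix $\mathcal{A}$ is symmetric, so the out-degree coincides with the ordinary degree and $L = D^{out} - \mathcal{A}$ is a real symmetric matrix. By the spectral theorem for real symmetric matrices, all eigenvalues of $L$ are real, which settles the first claim and lets me order them as $\lambda_1 \le \lambda_2 \le \dots \le \lambda_N$.

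Next I would establish that $L$ is positive semidefinite, so that every eigenvalue is nonnegative. The key identity is that for any $x = col(x_1,\dots,x_N) \in \mathbb{R}^N$,
$$x^\top L x = \frac{1}{2}\sum_{i=1}^{N}\sum_{j=1}^{N} a_{ij}(x_i - x_j)^2 \ge 0,$$
which follows by expanding the right-hand side and using $\sum_{j} a_{ij} = d_i^{out}$. Hence $\lambda_i \ge 0$ for all $i$. To pin down the smallest eigenvalue, I would observe that each row of $L$ sums to zero, so $L\boldsymbol1_N = \boldsymbol0_N$; thus $0$ is an eigenvalue with eigenvector $\boldsymbol1_N$, forcing $\lambda_1 = 0$.

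The crux is to show that the eigenvalue $0$ is simple, i.e.\ $\lambda_2 > 0$, and this is where connectivity enters. I would argue that if $x$ lies in the null space of $L$, then $x^\top L x = 0$, and by the identity above this forces $x_i = x_j$ for every pair $(i,j)$ with $a_{ij} > 0$, that is, across every edge of $\mathcal{G}$. Since $\mathcal{G}$ is connected, any two nodes are joined by a path, and propagating these equalities along such a path yields $x_i = x_j$ for all $i,j$; hence $x$ is a scalar multiple of $\boldsymbol1_N$. Therefore the null space of $L$ is one-dimensional, the algebraic multiplicity of the zero eigenvalue is exactly one (symmetry makes algebraic and geometric multiplicities coincide), and the remaining eigenvalues are strictly positive, giving $0 = \lambda_1 < \lambda_2 \le \dots \le \lambda_N$. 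The main obstacle is precisely this last step: the nonnegativity of the spectrum and the identification of $\boldsymbol1_N$ as a zero eigenvector are routine, but establishing simplicity of the zero eigenvalue requires carefully turning the edge-wise equalities into a global constancy statement through the connectivity assumption.
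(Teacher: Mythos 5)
Your proof is correct and is the standard spectral-theorem-plus-quadratic-form argument; the paper itself gives no proof, simply citing this as a known result from a graph theory text, and your argument is exactly the one such a reference would contain. The only implicit hypothesis worth flagging is that the edge weights $a_{ij}$ are nonnegative, which is needed for the identity $x^\top L x = \frac{1}{2}\sum_{i,j} a_{ij}(x_i - x_j)^2 \ge 0$ and for the edge-wise equalities to follow from $x^\top L x = 0$; this is standard for adjacency matrices and consistent with the paper's usage.
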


\begin{lemma}[\citealp{li2020}]\label{lemma2}
	If $\mathcal{G}$ is undirected and connected, there exists a positive definite matrix $\Gamma \in \mathbb{R}^{N\times N}$ such that $\Gamma L = L\Gamma=\Pi$, where $\Pi=I_N-\frac{1}{N}\boldsymbol1_N\boldsymbol1_N^\top $.
	Moreover, the eigenvalues of $\Gamma$ are $\lambda_\Gamma, \frac{1}{\lambda_2}, \dots, \frac{1}{\lambda_N}$, where $\lambda_\Gamma>0$ can be any positive constant and $\lambda_2, \dots, \lambda_N$ are defined in Lemma $1$.
\end{lemma}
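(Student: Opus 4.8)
The plan is to exploit the spectral structure of $L$ furnished by Lemma \ref{lemma1}. Since $\mathcal{G}$ is undirected, $L=D^{out}-A$ is symmetric and positive semidefinite, so it admits an orthogonal eigendecomposition $L=U\Lambda U^\top$ with $U=[u_1,u_2,\dots,u_N]$ orthogonal and $\Lambda=\mathrm{diag}(0,\lambda_2,\dots,\lambda_N)$. By Lemma \ref{lemma1} the zero eigenvalue is simple, and because $L\boldsymbol1_N=\boldsymbol0_N$ (each row of $L$ sums to zero) I can take the corresponding unit eigenvector to be $u_1=\frac{1}{\sqrt{N}}\boldsymbol1_N$. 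The key preliminary observation is that in this same basis the projector $\Pi$ is diagonalized by $U$: since $\frac{1}{N}\boldsymbol1_N\boldsymbol1_N^\top=u_1u_1^\top$, one has $\Pi=I_N-u_1u_1^\top=U\,\mathrm{diag}(0,1,\dots,1)\,U^\top$. Thus $\Pi$ is precisely the orthogonal projection onto the range of $L$, annihilating the $\boldsymbol1_N$ direction.

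With this in hand, I would simply \emph{define} the candidate matrix
\begin{equation}
\Gamma:=U\,\mathrm{diag}\!\left(\lambda_\Gamma,\tfrac{1}{\lambda_2},\dots,\tfrac{1}{\lambda_N}\right)U^\top,
\end{equation}
for an arbitrary constant $\lambda_\Gamma>0$. Since $\Gamma$ and $L$ are both diagonalized by the same orthogonal matrix $U$, they commute, and a direct computation gives
\begin{equation}
\Gamma L=U\,\mathrm{diag}\!\left(\lambda_\Gamma,\tfrac{1}{\lambda_2},\dots,\tfrac{1}{\lambda_N}\right)\mathrm{diag}(0,\lambda_2,\dots,\lambda_N)\,U^\top=U\,\mathrm{diag}(0,1,\dots,1)\,U^\top=\Pi,
\end{equation}
and identically $L\Gamma=\Pi$, which establishes the required identity $\Gamma L=L\Gamma=\Pi$. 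The claimed spectrum of $\Gamma$ is read off immediately from its construction, and positive definiteness follows because every diagonal entry $\lambda_\Gamma,1/\lambda_2,\dots,1/\lambda_N$ is strictly positive (the $\lambda_i$ being positive by Lemma \ref{lemma1}).

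The only point requiring genuine care — and hence the step I would highlight as the main obstacle — is reconciling the two demands placed on $\Gamma$: the algebraic identity $\Gamma L=\Pi$ on the one hand, and strict positive definiteness on the other. On the range of $L$ the identity forces $\Gamma$ to act as the Moore–Penrose inverse, pinning down the eigenvalues $1/\lambda_2,\dots,1/\lambda_N$ uniquely. Along the kernel direction $\boldsymbol1_N$, however, $\Pi$ carries no information (it maps $\boldsymbol1_N$ to zero regardless), so the identity leaves the corresponding eigenvalue of $\Gamma$ completely unconstrained. A naive choice such as the pseudoinverse $L^\dagger$ itself would assign $0$ there and yield only a positive \emph{semi}definite matrix. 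Assigning an arbitrary positive value $\lambda_\Gamma$ to this free eigenvalue is exactly what promotes $\Gamma$ to positive definiteness while preserving the identity, which explains both why such a $\Gamma$ exists and why $\lambda_\Gamma$ may be any positive constant. Equivalently, one may write $\Gamma=L^\dagger+\frac{\lambda_\Gamma}{N}\boldsymbol1_N\boldsymbol1_N^\top$ and verify the claims from this closed form.
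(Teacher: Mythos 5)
Your proof is correct. Note that the paper itself does not prove this lemma --- it is imported by citation from \citet{li2020} --- so there is no in-paper argument to compare against; your spectral construction $\Gamma=U\,\mathrm{diag}(\lambda_\Gamma,1/\lambda_2,\dots,1/\lambda_N)\,U^\top$ (equivalently $\Gamma=L^\dagger+\frac{\lambda_\Gamma}{N}\boldsymbol1_N\boldsymbol1_N^\top$) is the standard and essentially the only route, and your observation that the kernel direction of $L$ is unconstrained by $\Gamma L=\Pi$ correctly explains why $\lambda_\Gamma$ is a free positive parameter.
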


\section{Problem Formulation}\label{pro}

Consider a multi-agent system with $N$ heterogeneous agents, and the $i$th agent has the linear dynamics:
\begin{align}
	\begin{split}
		{\dot x}_i&=A_i x_i + B_i u_i,\\
		y_i&=C_i x_i,\label{eqsys}
	\end{split}
\end{align}
where $x_i\in\mathbb{R}^{n_i}$, $u_i\in\mathbb{R}^{p_i}$ and $y_i\in\mathbb{R}^{q}$ are the state, input and output variables respectively. $A_i\in\mathbb{R}^{n_i \times n_i}$, $B_i\in\mathbb{R}^{n_i \times p_i}$ and $C_i\in\mathbb{R}^{q \times n_i}$ are the state, input and output matrices which are constant.

The objective of this paper is to design a controller $u_i(t)$ for each agent by using only local interaction and information such that all agents cooperatively reach the optimal outputs that solve the following convex optimization problem:
\begin{align}
	\min\sum_{i=1}^N f_i(y_i),\ s.t.\ y_1=\dots =y_N,\label{eqfun}
\end{align}
where $f_i:\mathbb{R}^{q}\to\mathbb{R}$ is the local cost function which is only known to the $i$th agent.

\begin{assumption} \label{asp1}
	The communication network $\mathcal{G}$ is undirected and connected.	
\end{assumption}

\begin{assumption} \label{asp2}
	The local objective function $f_i$ is differentiable and its gradient is $w_i$-Lipschitz in $\mathbb{R}^{q}$:
	$$\lVert \nabla f_i(x)-\nabla f_i(y)\rVert \leq w_i\lVert x-y\rVert, \forall x,y\in\mathbb{R}^q, w_i>0.$$
	Denote $\overline w := \max\{w_1, \dots, w_N\}$.
\end{assumption}

\begin{assumption} \label{asp3}
	The local objective function $f_i$ is $m_i$-strongly convex:
	$$(x-y)^\top (\nabla f_i(x)-\nabla f_i(y))\ge m_i{\lVert x-y\rVert}^2, \forall x,y\in\mathbb{R}^q, m_i>0.$$
	Let $\underline m := \min\{m_1, \dots, m_N\}$.
\end{assumption}

\begin{remark}
	Because of the strong convexity of $f_i$, $f$ is strongly convex, which guarantees the uniqueness of the optimal solution to (\ref{eqfun}).
\end{remark}

\begin{assumption} \label{asp4}
	$(A_i,B_i)$ is controllable, and
	\begin{align}
		rank(C_iB_i)=q,i=1,\dots,N.\label{eqrank1}
	\end{align}
\end{assumption}

\begin{lemma}\label{lemma3}
	Under Assumption \ref{asp4}, the matrix equations
	\begin{subequations}\label{eqle}
		\begin{align}
			C_iB_iK_{\alpha_i}&=C_iA_i,\label{eqle1}\\
			C_iB_iK_{\beta_i}&=I_q,\label{eqle2}
		\end{align}
	\end{subequations}
	exist solutions $K_{\alpha_i}$, $K_{\beta_i}$.
\end{lemma}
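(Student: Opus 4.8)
The plan is to recognize that both equations in (\ref{eqle}) are linear matrix equations of the form $MX=N$ with a common coefficient matrix $M=C_iB_i$, and to reduce their solvability to a full-row-rank property. First I would note the dimensions: since $C_i\in\mathbb{R}^{q\times n_i}$ and $B_i\in\mathbb{R}^{n_i\times p_i}$, the product satisfies $C_iB_i\in\mathbb{R}^{q\times p_i}$, so the condition $rank(C_iB_i)=q$ from Assumption \ref{asp4} means $C_iB_i$ has full row rank. The crucial consequence I would extract is that the column space (range) of $C_iB_i$ is all of $\mathbb{R}^{q}$.

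Next I would invoke the standard solvability criterion: a matrix equation $MX=N$ admits a solution $X$ if and only if every column of $N$ lies in the range of $M$. For (\ref{eqle1}) the right-hand side $C_iA_i$ lies in $\mathbb{R}^{q\times n_i}$, and for (\ref{eqle2}) the right-hand side $I_q$ lies in $\mathbb{R}^{q\times q}$; in both cases every column is an element of $\mathbb{R}^{q}$. Since the range of $C_iB_i$ is all of $\mathbb{R}^{q}$, each such column is attainable, and therefore the solutions $K_{\alpha_i}\in\mathbb{R}^{p_i\times n_i}$ and $K_{\beta_i}\in\mathbb{R}^{p_i\times q}$ exist.

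To make the argument constructive I would exhibit an explicit right inverse. Because $C_iB_i$ has full row rank $q$, the $q\times q$ matrix $(C_iB_i)(C_iB_i)^\top$ is nonsingular, so the right pseudoinverse $(C_iB_i)^\top[(C_iB_i)(C_iB_i)^\top]^{-1}$ is well defined and satisfies $(C_iB_i)\,(C_iB_i)^\top[(C_iB_i)(C_iB_i)^\top]^{-1}=I_q$. Setting
$$K_{\alpha_i}=(C_iB_i)^\top[(C_iB_i)(C_iB_i)^\top]^{-1}C_iA_i,\qquad K_{\beta_i}=(C_iB_i)^\top[(C_iB_i)(C_iB_i)^\top]^{-1}$$
and substituting into (\ref{eqle1}) and (\ref{eqle2}) verifies both equations directly.

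There is no genuinely hard step here; the only subtlety worth flagging is that the existence relies solely on the rank condition $rank(C_iB_i)=q$, while the controllability of $(A_i,B_i)$ stated in Assumption \ref{asp4} plays no role in this particular lemma (it is presumably needed elsewhere in the design). I would therefore take care not to invoke controllability in the proof, keeping the argument resting only on the full-row-rank property.
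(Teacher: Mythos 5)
Your proposal is correct and rests on the same fact as the paper's proof: $rank(C_iB_i)=q$ means $C_iB_i$ has full row rank, so the solvability criterion $rank(C_iB_i,\,N)=rank(C_iB_i)$ holds automatically for any right-hand side $N$ (the paper states this via augmented-matrix ranks, you via the range being all of $\mathbb{R}^q$ plus an explicit right pseudoinverse). Your observation that controllability of $(A_i,B_i)$ is not needed for this lemma is accurate and consistent with the paper, which likewise uses only the rank condition here.
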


\begin{proof}
	From (\ref{eqrank1}), we can get
	\begin{subequations}\label{eqrank2}
		\begin{align}
			rank(C_iB_i, C_iA_i)=q=rank(C_iB_i),\label{eqrank2a}\\
			rank(C_iB_i, I_q)=q=rank(C_iB_i).\label{eqrank2b}
		\end{align}
	\end{subequations}
	Thus (\ref{eqle1}) and (\ref{eqle2}) have solutions $K_{\alpha_i}$, $K_{\beta_i}$.
\end{proof}

\begin{remark}
	The controllability in Assumptions \ref{asp4} is quite standard in dealing with the problem for linear systems. And the requirement (\ref{eqrank1}) is employed to guarantee the solvability of matrix equations (\ref{eqle}), which is strictly weaker than the assumption (i.e., $rank\begin{bmatrix} C_iB_i &\boldsymbol0\\ -A_iB_i &B_i \end{bmatrix} = n_i+q$, $i=1,\dots,N$) employed in \citep{li2020,zhang2020}.
\end{remark}

\section{Main Results}\label{mai}

\subsection{Continuous Communication}

A PI controller for the $i$th agent is proposed as
\begin{subequations}\label{eqpi1}
	\begin{align}
		u_i&=-K_{\alpha_i} x_i+K_{\beta_i} (-\nabla f_i(y_i)-\sum_{j=1}^N a_{ij}(y_i-y_j)-\eta_i),\label{eqpi1a}\\
		\dot\eta_i&=\sum_{j=1}^N a_{ij}(y_i-y_j),\ \eta_i(0)=\boldsymbol 0_q,\label{eqpi1b}
	\end{align}
\end{subequations}
where $K_{\alpha_i},K_{\beta_i}$ are feedback matrices and $a_{ij}$ is the weight corresponding to the edge $(j,i)$.

\begin{figure}[htbp]
	\centering
	\includegraphics[width=0.45\textwidth]{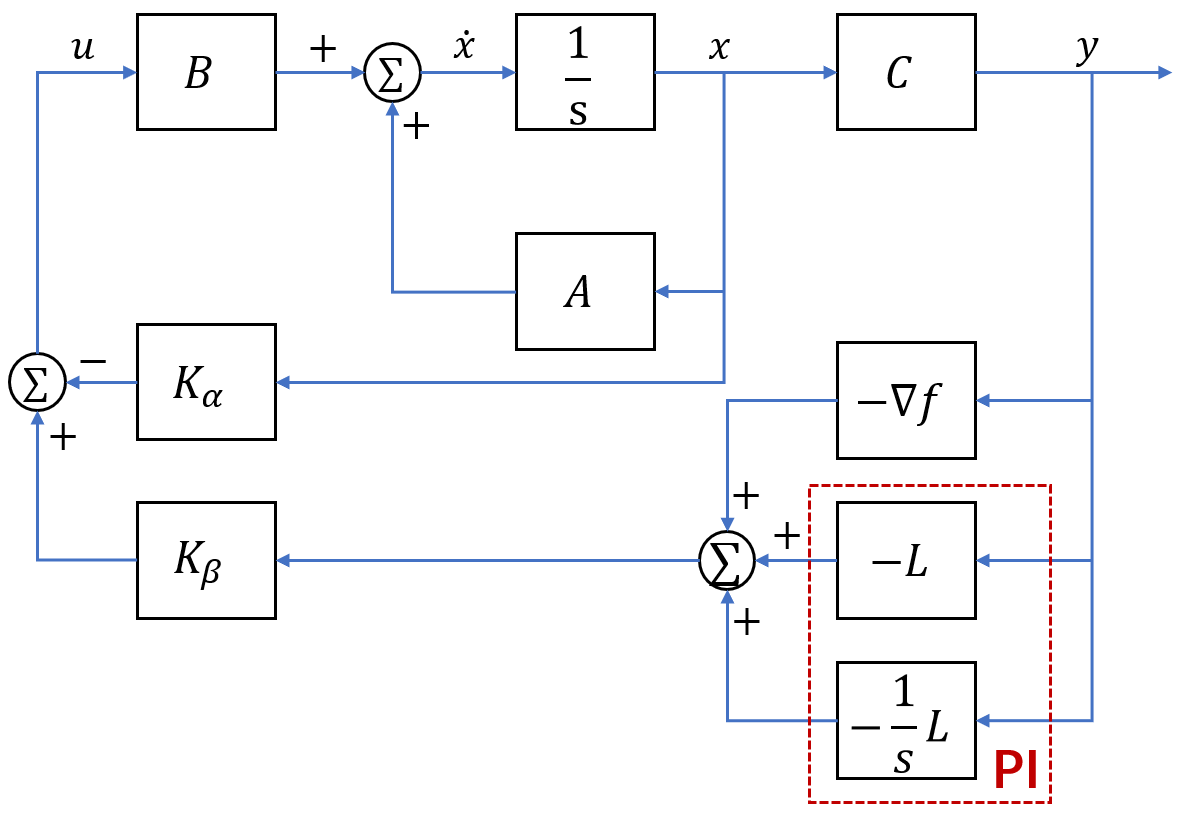} 
	\caption{Proportional-Integral controller to solve (\ref{eqfun}).}
	\label{pic1}
\end{figure}

As shown in \autoref{pic1}, the compact form of the closed-loop system is
\begin{subequations}\label{eqpi2}
	\begin{align}
		\dot x&=(A-BK_\alpha)x+BK_\beta(-\nabla f(y)-(L\otimes I_q)y-\eta),\label{eqpi2a}\\
		\dot\eta&=(L\otimes I_q)y,\ \eta(0)=\boldsymbol 0_{Nq},\label{eqpi2b}\\
		y&=Cx,\label{eqpi2c}
	\end{align}
\end{subequations}
where $x=col(x_1,\dots,x_N)$, $\eta=col(\eta_1,\dots,\eta_N)$, $y=col(y_1,\dots,y_N)$, $A=diag(A_1,\dots,A_N)$, $B=diag(B_1,\dots,B_N)$, $C=diag(C_1,\dots,C_N)$, $K_\alpha=diag(K_{\alpha_1},\dots,K_{\alpha_N})$, $K_\beta=diag(K_{\beta_1},\dots,K_{\beta_N})$, and $\nabla f(y)=col(\nabla f_1(y_1),\dots,\nabla f_N(y_N))$.
\begin{theorem}\label{th1}
	Supposing that Assumptions \ref{asp1}-\ref{asp4} hold, for linear multi-agent system (\ref{eqsys}) with control protocol (\ref{eqpi1}), the problem (\ref{eqfun}) is solved and $y_i(t)$ converges to $y^*$ exponentially as $t\to\infty$ for $i=1,\dots,N$, where $y^*\in\mathbb{R}^q$ is the optimal solution to (\ref{eqfun}), and the feedback matrices $K_{\alpha_i}$ and $K_{\beta_i}$ are solutions of equations (\ref{eqle1}) and (\ref{eqle2}), respectively. 
	
\end{theorem}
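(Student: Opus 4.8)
The plan is to first exploit the two matrix equations (\ref{eqle1})--(\ref{eqle2}) to collapse the heterogeneous closed loop into a homogeneous first-order optimizer dynamics purely in the outputs. Differentiating $y=Cx$ in (\ref{eqpi2c}) and inserting (\ref{eqpi2a}), the internal state drops out because $CBK_\alpha=\mathrm{diag}(C_iB_iK_{\alpha_i})=\mathrm{diag}(C_iA_i)=CA$, while $CBK_\beta=\mathrm{diag}(C_iB_iK_{\beta_i})=I_{Nq}$. This yields the self-contained reduced system $\dot y=-\nabla f(y)-(L\otimes I_q)y-\eta$ with $\dot\eta=(L\otimes I_q)y$, no longer involving $A_i,B_i,C_i$. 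Since the theorem concerns only outputs, it suffices to analyze this $(y,\eta)$-subsystem.

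Next I would identify the equilibrium $(\bar y,\bar\eta)$ and show it encodes the optimizer of (\ref{eqfun}). Setting both right-hand sides to zero forces $(L\otimes I_q)\bar y=\boldsymbol0$, so by Lemma \ref{lemma1} $\bar y=\boldsymbol1_N\otimes y^*$ for some $y^*\in\mathbb{R}^q$, and $\bar\eta=-\nabla f(\bar y)$. The crucial observation is a conservation law: left-multiplying (\ref{eqpi2b}) by $\boldsymbol1_N^\top\otimes I_q$ and using $\boldsymbol1_N^\top L=\boldsymbol0$ gives $\frac{d}{dt}(\boldsymbol1_N^\top\otimes I_q)\eta=\boldsymbol0$, so with $\eta(0)=\boldsymbol0$ every trajectory stays in $\{\eta:(\boldsymbol1_N^\top\otimes I_q)\eta=\boldsymbol0\}$. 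Evaluating $\bar\eta=-\nabla f(\bar y)$ in this set yields $\sum_{i=1}^N\nabla f_i(y^*)=\boldsymbol0$, exactly the first-order optimality condition for $\min_z\sum_i f_i(z)$; by strong convexity (Assumption \ref{asp3}) this $y^*$ is the unique optimizer of (\ref{eqfun}).

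For convergence, I would set $\tilde y=y-\bar y$, $\tilde\eta=\eta-\bar\eta$ (so $(\boldsymbol1_N^\top\otimes I_q)\tilde\eta=\boldsymbol0$, equivalently $(\Pi\otimes I_q)\tilde\eta=\tilde\eta$) and invoke Lemma \ref{lemma2} to form $V_0=\tfrac12\|\tilde y\|^2+\tfrac12\tilde\eta^\top(\Gamma\otimes I_q)\tilde\eta$. Here $\Gamma$ is chosen precisely so that $\Gamma L=\Pi$: along trajectories the indefinite coupling collapses to $-\tilde y^\top\tilde\eta+\tilde\eta^\top(\Pi\otimes I_q)\tilde y=0$ by the invariance $(\Pi\otimes I_q)\tilde\eta=\tilde\eta$, leaving $\dot V_0=-\tilde y^\top(\nabla f(y)-\nabla f(\bar y))-\tilde y^\top(L\otimes I_q)\tilde y\le-\underline m\|\tilde y\|^2$. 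This already gives asymptotic decay of $\tilde y$, but because the $\eta$-channel carries no intrinsic damping it does not bound $\tilde\eta$; this gap is exactly the difference between the asymptotic result of \citep{li2020} and the exponential claim here.

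The main obstacle, and the decisive step, is injecting damping into $\tilde\eta$ through a small cross term. I would take $V=V_0+\epsilon\,\tilde y^\top\tilde\eta$; differentiating produces the favorable $-\epsilon\|\tilde\eta\|^2$, a term $+\epsilon\tilde y^\top(L\otimes I_q)\tilde y$ that merges with the $-\tilde y^\top(L\otimes I_q)\tilde y$ of $\dot V_0$ into $-(1-\epsilon)\tilde y^\top(L\otimes I_q)\tilde y\le0$, and cross terms $-\epsilon(\nabla f(y)-\nabla f(\bar y))^\top\tilde\eta-\epsilon\tilde y^\top(L\otimes I_q)\tilde\eta$ bounded via Assumption \ref{asp2} and $\|L\|=\lambda_N$ by $\epsilon(\overline w+\lambda_N)\|\tilde y\|\|\tilde\eta\|$. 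Completing the square gives $\dot V\le-(\underline m-\tfrac{\epsilon}{2}(\overline w+\lambda_N)^2)\|\tilde y\|^2-\tfrac{\epsilon}{2}\|\tilde\eta\|^2$. Everything hinges on choosing $\epsilon$ small enough that simultaneously $V$ stays positive definite (using $\tilde\eta^\top(\Gamma\otimes I_q)\tilde\eta\ge\lambda_N^{-1}\|\tilde\eta\|^2$ on the $\Pi$-space) and both coefficients of $\dot V$ remain positive, i.e. roughly $\epsilon<\min\{1,\lambda_N^{-1},2\underline m/(\overline w+\lambda_N)^2\}$. Then $c_1(\|\tilde y\|^2+\|\tilde\eta\|^2)\le V\le c_2(\|\tilde y\|^2+\|\tilde\eta\|^2)$ and $\dot V\le-c_3V$, so by the comparison lemma $V(t)\le V(0)e^{-c_3t}$, whence $\tilde y(t)\to\boldsymbol0$ and each $y_i(t)\to y^*$ exponentially.
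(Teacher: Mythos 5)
Your proposal is correct and follows essentially the same route as the paper: use (\ref{eqle1})--(\ref{eqle2}) to collapse the closed loop to the output--integrator dynamics, shift to error coordinates about the optimal equilibrium (using the conservation law $(\boldsymbol1_N^\top\otimes I_q)\eta\equiv\boldsymbol0$), and build a Lyapunov function from $\|\rho\|^2$, the $\Gamma$-weighted term $\sigma^\top(\Gamma\otimes I_q)\sigma$ of Lemma \ref{lemma2}, and a $\rho^\top\sigma$ cross term that injects damping into the integrator state. Your small-$\epsilon$ cross term is just a rescaling of the paper's large-$\xi$ quadratic weighting (the paper additionally carries a $\tfrac{1}{2}\sigma^\top\sigma$ term so that the $\sigma^\top(L\otimes I_q)\rho$ contribution cancels exactly rather than being bounded by Cauchy--Schwarz), and both arguments end with $\dot V\le -cV$ and hence exponential convergence of $y_i(t)$ to $y^*$.
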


\begin{proof}
	First, we discuss the relationship between the equilibrium point of (\ref{eqpi2}) and the optimal solution of (\ref{eqfun}). Pre-multiplying (\ref{eqpi2a}) by $C$, we get
	\begin{subequations}\label{eqpi3}
		\begin{align}
			\dot y&=-\nabla f(y)-(L\otimes I_q)y-\eta,\label{eqpi3a}\\
			\dot\eta&=(L\otimes I_q)y,\ \eta(0)=\boldsymbol 0_{Nq}.\label{eqpi3b}
		\end{align}
	\end{subequations}
	
	Because $\eta(0)=\boldsymbol 0_{Nq}$ and $\boldsymbol 1_N^\top L=\boldsymbol0_N^\top $, we can get $(\boldsymbol 1_N^\top \otimes I_q)\eta(t)=0,\forall t>0$. Pre-multiply (\ref{eqpi3a}) by $(\boldsymbol 1_N^\top \otimes I_q)$, and let the derivatives of (\ref{eqpi3}) be equal to $\boldsymbol 0$. Then the equilibrium point $(\bar y,\bar\eta)$ satisfies
	\begin{subequations}
		\begin{align}
			\sum_{i=1}^N \nabla f_i(\bar y_i)=0,\label{eqbala}\\
			(L\otimes I_q)\bar y=\boldsymbol 0_{Nq},\label{eqbalb}
		\end{align}
	\end{subequations}
	where $\bar y=col(\bar y_1, \dots, \bar y_N)$.
	
	From (\ref{eqbalb}) we know all $\bar{y}_i$ reach consensus, which together with (\ref{eqbala}) ensures that the equilibrium point is the optimal solution to (\ref{eqfun}).
	
	To proceed, taking $\rho=y-\bar y$, $\sigma=\eta-\bar\eta$, one has
	\begin{subequations}\label{eqpi4}
		\begin{align}
			\dot{\rho}&=-\boldsymbol h-(L\otimes I_q)\rho-\sigma,\label{eqpi4a}\\
			\dot{\sigma}&=(L\otimes I_q)\rho\label{eqpi4b},
		\end{align}
	\end{subequations}
	where $\boldsymbol h:=\nabla f(y)-\nabla f(\bar y)$.
	
	Next we only need to discuss the convergence of (\ref{eqpi4}).

	Select a Lyapunov candidate as
	\begin{align}
		V =&\frac{\xi}{2}\rho^\top \rho + \sigma^\top \rho + \frac{1}{2}\sigma^\top \sigma + \frac{\xi}{2}\sigma^\top (\Gamma\otimes I_q)\sigma\notag\\
		\le&\overline\lambda_{_E} p^\top p,\label{v1}
	\end{align}
	where $p:=col(\rho,\sigma)$, $\xi$ is a parameter to be determined, $\Gamma$ is given in Lemma \autoref{lemma2}, and $\overline\lambda_{_E}$ is the maximum eigenvalue of
	\begin{align}
		E:=\frac{1}{2}
		\begin{pmatrix}
			\xi I_{Nq}& I_{Nq}& \\
			I_{Nq}& I_{Nq}+\xi (\Gamma\otimes I_q)& \\
		\end{pmatrix}.\notag
	\end{align}
	It is easy to verify that $E$ is a positive definite matrix for $\xi \ge 1$ by Schur complement.
	
	The derivative of (\ref{v1}) is
	\begin{align}
		\dot V = &-\xi\rho^\top \boldsymbol h - (\xi-1)\rho^\top (L\otimes I_q)\rho - \sigma^\top \sigma - \sigma^\top \boldsymbol h\notag\\
		& - \xi\rho^\top ((I_N-\Pi)\otimes I_q)\sigma,\label{v1'}
	\end{align}
	where $\Pi$ is defined in Lemma \autoref{lemma2}.
	
	By the strong convexity of $f_i$, we have
	\begin{align}
		\rho^\top \boldsymbol h\ge \underline m\rho^\top \rho.\label{zm1}
	\end{align}
	
	Because the gradients of local objective functions are Lipschitz, we have $
	\left \| \boldsymbol h \right \|\le \overline w\left\|\rho\right\|$.
	Then one has
	\begin{align}
		-\sigma^\top \boldsymbol h \le \frac{1}{2}\sigma^\top \sigma + \frac{1}{2}\overline w^2\rho^\top \rho.\label{zm2}
	\end{align}
	
	From the fact that $(\boldsymbol 1_N^\top \otimes I_q)\eta=0$, it can be obtained that $((I_N-\Pi)\otimes I_q)\sigma=0$. 
	Since the matrix $L\ge 0$ and $\xi\ge1$, we can get $(\xi-1)\rho^\top (L\otimes I_q)\rho\ge 0$. Substituting (\ref{zm1}), (\ref{zm2}) into (\ref{v1'}), one has
	\begin{align}
		\dot V \le &-\xi \underline m\rho^\top \rho + \frac{1}{2}\overline w^2\rho^\top \rho - \frac{1}{2}\sigma^\top \sigma\notag\\
		= & -p^\top F_1p,\notag
	\end{align}
	
	with
	\begin{align}
		F_1:=
		\begin{pmatrix}
			(\xi\underline m-\frac{1}{2}\overline w^2)\otimes I_{Nq}& \boldsymbol0_{Nq}& \\
			\boldsymbol0_{Nq}& \frac{1}{2} I_{Nq}& \\
		\end{pmatrix}.\notag
	\end{align}
	
	For $\xi > \frac{\overline w^2}{2\underline m}$, the matrix $F_1$ is positive definite. Note that $\underline \lambda_{F_1}:=\min\{\xi\underline m-\frac{1}{2}\overline w^2, \frac{1}{2}\}$ is the smallest eigenvalues of $F_1$,
	\begin{align}
		\dot V \le -\underline \lambda_{F_1}p^\top p.\label{v1'1}
	\end{align}
	Finally, setting $\xi>\max\{1,\frac{\overline w^2}{2\underline m}\}$, by Theorem 4.10 in \citet{nonlinear}, we can conclude the global exponential stability of system (\ref{eqpi3}), and the variable $y$ satisfies
	\begin{align}
		\|y(t)-y^*\| \le c_1e^{-\frac{c_2}{2}t},\label{yt}
	\end{align}
	where $c_1>0$ is some constant and $c_2:=\frac{\underline \lambda_{F_1}}{\overline\lambda_{_E}}$.	
\end{proof}

\begin{remark}
	For each agent, the parameters in algorithm (\ref{eqpi1}) only depend on its own information, so the proposed algorithm is fully distributed. In comparison, an exponential convergence rate is established, while the most related work \citep{li2020} only provides an asymptotic convergence without analysis of the convergence speed building upon a stronger assumption than Assumption \ref{asp4} here. Meanwhile, compared with \citet{zhang2020}, where an exponential convergence rate is obtained for homogeneous linear multi-agent systems under fixed directed graphs, the exponential convergence is established here for heterogeneous linear multi-agent systems based on a strictly weaker Assumption \ref{asp4}, including homogeneous linear multi-agent systems as a special case. It should be also noted that the algorithm in Theorem \ref{th1} can be extended to fixed directed graphs just like \citet{zhang2020}.
\end{remark}

\begin{remark}
	For the convergence rate, $c_2$ reaches the maximum value
	$\overline c_2:=\frac{2}{\xi+\frac{\xi}{\lambda_2} +1 + \sqrt{(\frac{1}{\lambda_2^2}-\frac{2}{\lambda_2}+1)\xi^2+(\frac{2}{\lambda_2}-2)\xi+5}}$ when choosing $\xi=\frac{\overline w^2 + 1}{2\underline m}$ in (\ref{yt}), where $\lambda_2$ is the second smallest eigenvalue of $L$, called the algebraic connectivity of $\mathcal{G}$.
	To compare with the algorithm in \citet{kia2015} for first-order integrator systems, by setting global parameters $\alpha=\beta=1$ there, its variable $x$ has a similar convergence to (\ref{yt}):  $\|x(t)-x^*\| \le c_3e^{-\frac{c_4}{2}t}$, where $c_3>0$ is some constant and $c_4:=\frac{4min\{(\phi+1)\underline m-\frac{1}{2}\overline w^2,\frac{1}{2}\}}{\phi+\frac{\phi}{\lambda_2} + \frac{1}{\lambda_2} +2 + \sqrt{(\frac{1}{\lambda_2^2}-\frac{2}{\lambda_2}+1)\phi^2+(\frac{2}{\lambda_2^2}-\frac{2}{\lambda_2})\phi+\frac{1}{\lambda_2^2}+4}}$
	with a parameter $\phi>\max\{1,\frac{\overline w^2}{2\underline m}-1\}$. 
	When $\overline w^2\ge4\underline m-1$, by choosing $\phi=\frac{\overline w^2 - 2\underline m + 1}{2\underline m}$, $c_4$ achieves its maximum value which is the same as $\overline c_2$, that is, the same convergence rate is obtained for both algorithms.
	When $\overline w^2<4\underline m-1$, $c_4$ reaches the maximum value $\overline c_4:=\frac{2}
	{\frac{2}{\lambda_2}+3+\sqrt{\frac{4}{\lambda_2^2}-\frac{4}{\lambda_2}+5}}$ 
	by choosing $\phi=1$, and it is easy to verify that $\overline c_2>\overline c_4$, which shows that the established convergence rate for algorithm (\ref{eqpi1}) in this paper is faster than that of the algorithm in \citet{kia2015}. 
\end{remark}

\subsection{Periodic Communication}
In order to avoid continuous communication and reduce communication overhead, we next discuss the case of discrete communication.

Suppose that $t_k^i$ is the $k$th communication instant of the $i$th agent, and denote $\hat{y}_i(t):=\hat y_i(t_k^i), \forall t\in [t_k^i,t_{k+1}^i)$ as the latest known output of agent $i\in V$ transmitted to its neighbors. The communication instant sequence of the $i$th agent $\{t_1^i, \dots, t_k^i, \dots\}$ will be determined later. We define a measurement error $e_i:=\hat y_i(t)-y_i(t)$, and it is clear that $e_i=0$ at any instant $t_k^i$.

Consider the next implementation of the algorithm (\ref{eqpi1}) with discrete-time communication,
\begin{subequations}\label{eqpib}
	\begin{align}
		u_i&=-K_{\alpha_i}x_i+K_{\beta_i}(-\nabla f_i(y_i)-\sum_{j=1}^N a_{ij}(\hat y_i-\hat y_j)-\eta_i),\label{eqpiba}\\
		\dot\eta_i&=\sum_{j=1}^N a_{ij}(\hat y_i-\hat y_j),\ \eta_i(0)=\boldsymbol 0_q,\label{eqpibb}
	\end{align}
\end{subequations}
where $K_{\alpha_i},K_{\beta_i}$ are feedback matrices and $a_{ij}$ is the weight corresponding to the edge $(j,i)$.

The following is the conclusion of the periodic communication control law.

\begin{theorem}\label{th2}
	Supposing that Assumptions \ref{asp1}-\ref{asp4} hold, for linear multi-agent system (\ref{eqsys}) with discrete control protocol (\ref{eqpib}), the problem (\ref{eqfun}) is solved and $y_i(t)$ converges to $y^*$ exponentially as $t\to\infty$ for $i=1,\dots,N$, if the communication instant is set as $t_k^{i+1}=t_k^i+ \Delta$, $\forall\Delta \in (0, \tau_0]$, where $y^*\in\mathbb{R}^q$ is the optimal solution to (\ref{eqfun}),
	\begin{align}
		\tau_0:=\frac{1}{\overline{w}+1} \ln{\bigg(1+ \frac{(\overline{w}+1)\epsilon} {\overline{w}+1+\sqrt{2}\lambda_N+\sqrt{2}\lambda_N\epsilon}\bigg)},\label{tau1}
	\end{align}
	with $\epsilon:=\frac{1}{2\sqrt{2(\xi^2+(\xi-1)^2)}}$ and $\xi>\frac{4\overline{w}^2+2\lambda_N^2+1}{8\underline m}$, and the feedback matrices $K_{\alpha_i}$ and $K_{\beta_i}$ are solutions of equations (\ref{eqle1}) and (\ref{eqle2}), respectively.
\end{theorem}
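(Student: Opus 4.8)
The plan is to treat the held-input scheme \eqref{eqpib} as a perturbation of the continuous-time error system \eqref{eqpi4}, to reuse the Lyapunov function $V$ from the proof of Theorem \ref{th1}, and to add one new ingredient: a dwell-time bound guaranteeing that the measurement error stays small relative to the state. First I would rewrite \eqref{eqpib} in the shifted coordinates $\rho=y-\bar y$, $\sigma=\eta-\bar\eta$. Since $\hat y=y+e$ with $e=\mathrm{col}(e_1,\dots,e_N)$ and $(L\otimes I_q)\bar y=\boldsymbol 0_{Nq}$, substitution gives $\dot\rho=-\boldsymbol h-(L\otimes I_q)(\rho+e)-\sigma$ and $\dot\sigma=(L\otimes I_q)(\rho+e)$, i.e. exactly \eqref{eqpi4} plus the perturbation $\mathrm{col}(-(L\otimes I_q)e,\,(L\otimes I_q)e)$. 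I would note at the outset that the identity $(\boldsymbol 1_N^\top\otimes I_q)\eta=0$, hence $((I_N-\Pi)\otimes I_q)\sigma=0$, survives the sampling unchanged, because its derivation used only $\boldsymbol 1_N^\top L=\boldsymbol 0_N^\top$ and $\eta(0)=\boldsymbol 0_{Nq}$; this keeps available the same projection simplification as in Theorem \ref{th1}.

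Differentiating the same $V$ along the perturbed flow reproduces the bound \eqref{v1'1} from the unperturbed terms and leaves an indefinite cross term $[-(\xi-1)\rho+\xi(\Gamma\otimes I_q)\sigma]^\top(L\otimes I_q)e$. Using $\|L\otimes I_q\|=\lambda_N$, Cauchy--Schwarz, and the projection identity (under which $(\Gamma\otimes I_q)\sigma$ collapses against $L\otimes I_q$ to give $\sigma$), this cross term is dominated by a constant multiple of $\|p\|\,\|e\|$. The key design point is that, whenever the invariant $\|e\|\le\epsilon\|p\|$ holds with $\epsilon$ as in \eqref{tau1}, the strengthened lower bound $\xi>\frac{4\overline w^2+2\lambda_N^2+1}{8\underline m}$ makes the residual coefficient of $\|p\|^2$ strictly negative, so $\dot V\le -c\|p\|^2$ for some $c>0$. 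Because $V$ depends only on $(\rho,\sigma)$, which are continuous across sampling instants (only the perturbation $e$ is reset), this pointwise estimate combines with \eqref{v1} to give $\dot V\le -\tfrac{c}{\overline\lambda_{_E}}V$, hence exponential decay of $\|p\|$ and of $\|y(t)-y^*\|$ by Theorem 4.10 in \citet{nonlinear}, exactly as in \eqref{yt}.

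The crux is therefore to enforce the invariant $\|e(t)\|\le\epsilon\|p(t)\|$ for all $t$ under periodic sampling with period $\Delta\le\tau_0$. On each interval the held value is frozen, so $e$ resets to $\boldsymbol 0_{Nq}$ at the sampling instant and $\dot e=-\dot y=-\dot\rho$. Bounding $\|\dot\rho\|$ and $\|\dot p\|$ and passing to the normalized error $r:=\|e\|/\|p\|$ yields a scalar differential inequality whose linearization on the admissible region $\{r\le\epsilon\}$ reads $\dot r\le(\overline w+1)\,r+\big[(\overline w+1)+\sqrt2\,\lambda_N(1+\epsilon)\big]$; here the $(\overline w+1)$ factor collects the Lipschitz gradient and the diagonal feedback, while the $\sqrt2\,\lambda_N$ factor is the norm of the Laplacian coupling $\mathrm{col}(-(L\otimes I_q)e,(L\otimes I_q)e)$ entering $\dot\rho$ and $\dot\sigma$ with opposite signs. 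Integrating this linear inequality from $r=0$ shows that $r$ cannot reach $\epsilon$ before time $\tau_0=\frac{1}{\overline w+1}\ln\!\big(1+\frac{(\overline w+1)\epsilon}{\overline w+1+\sqrt2\lambda_N+\sqrt2\lambda_N\epsilon}\big)$, which is precisely \eqref{tau1}; thus any $\Delta\le\tau_0$ maintains the invariant, and $\tau_0>0$ makes the admissible set of periods nonempty and supplies a uniform positive dwell time.

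I expect the main obstacle to be making this last step rigorous rather than heuristic, since the $r$-inequality is only valid while $\|e\|\le\epsilon\|p\|$ holds, so the argument is a bootstrapping/continuity one. I would organize it as a maximal-interval argument: the invariant holds at the sampling instant because $e=\boldsymbol 0_{Nq}$ there, it persists on a maximal subinterval by continuity, and the integrated inequality forbids $r=\epsilon$ before time $\tau_0$; care is needed that $\|p\|$ does not vanish (which is guaranteed away from the equilibrium, the very state we are driving toward) and that the constants in the $r$-inequality coincide with those in \eqref{tau1}. Matching these constants exactly, in particular isolating the $(\overline w+1)$ growth from the $\sqrt2\,\lambda_N$ coupling and tracking where the factor $\sqrt{2(\xi^2+(\xi-1)^2)}$ in $\epsilon$ enters the $\dot V$ estimate, is the part demanding the most careful bookkeeping.
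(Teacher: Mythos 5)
Your proposal follows essentially the same route as the paper: the same shifted coordinates with $\hat\rho=\rho+e$, the same Lyapunov function $V$ from Theorem~\ref{th1}, the preserved identity $(\boldsymbol 1_N^\top\otimes I_q)\sigma=0$, and the same normalized error ratio $q=\|e\|/\|p\|$ with threshold $\epsilon$ and dwell time $\tau_0$ (a device the paper credits to \citet{tran2018}). The paper closes the $\dot V$ estimate with Young's inequality on the two cross terms, arriving at $\dot V\le -p^\top F_2p-\tfrac18(p^\top p-\epsilon^{-2}e^\top e)$, which is exactly the quantitative form of your claim that the residual coefficient of $\|p\|^2$ is negative whenever $\|e\|\le\epsilon\|p\|$; this is where the factor $\sqrt{2(\xi^2+(\xi-1)^2)}$ in $\epsilon$ and the strengthened bound on $\xi$ originate.

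The one step that does not hold as written is your linearization of the ratio inequality. The valid bound, derived in the paper, is the quadratic (Riccati-type) inequality
\begin{align}
\dot q \;\le\; (\overline w+1)(1+q)+\sqrt 2\,\lambda_N(1+q)^2,\notag
\end{align}
whereas your linear majorant $\dot r\le(\overline w+1)r+\big[(\overline w+1)+\sqrt2\lambda_N(1+\epsilon)\big]$ would require $(1+q)^2\le 1+\epsilon$ throughout $\{q\le\epsilon\}$, which fails for $q$ near $\epsilon$ since $(1+\epsilon)^2>1+\epsilon$; the comparison principle therefore cannot be invoked with that linear system. By an algebraic coincidence the first time your linear system reaches $\epsilon$ equals the exact first-crossing time of the Riccati ODE, which is why you still land on formula (\ref{tau1}), but the derivation needs repair. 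The paper's fix is to solve $\dot\mu=(\overline w+1)(1+\mu)+\sqrt2\lambda_N(1+\mu)^2$, $\mu(0)=0$, in closed form and apply the comparison lemma (Lemma 3.4 of \citet{nonlinear}) to get $q(t)\le\mu(t)$ with $\mu(\tau_0)=\epsilon$ exactly; note that the \emph{correct} linearization $(1+q)^2\le(1+\epsilon)(1+q)$ would yield a strictly smaller dwell time and would not reproduce (\ref{tau1}). Your remaining concerns (the maximal-interval bootstrapping and $\|p\|$ vanishing at the equilibrium) are legitimate but routine, and the paper passes over them silently as well.
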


\begin{proof}
	The compact form of the closed-loop system is
	\begin{subequations}\label{eqpib2}
		\begin{align}
			\dot x&=(A-BK_\alpha)x+BK_\beta(-\nabla f(y)-(L\otimes I_q)\hat y-\eta),\label{eqpib2a}\\
			\dot\eta&=(L\otimes I_q)\hat y,\ \eta(0)=\boldsymbol 0_{Nq},\label{eqpib2b}\\
			y&=Cx,\label{eqpib2c}
		\end{align}
	\end{subequations}
	where $\hat y=col(\hat y_1,\dots,\hat y_N)$.
	
	Using the same state transformation as in the previous section and letting $\hat\rho:=\hat y-\bar{y}$, the dynamics (\ref{eqpib2}) can be written as
	\begin{subequations}\label{eqpib3}
		\begin{align}
			\dot{\rho}&=-\boldsymbol h-(L\otimes I_q)\hat\rho-\sigma,\label{eqpib3a}\\
			\dot{\sigma}&=(L\otimes I_q)\hat\rho\label{eqpib3b}.
		\end{align}
	\end{subequations}
	And because $e_i=\hat y_i(t)-y_i(t)$, we have $\hat\rho=\rho+e$ with $e=col(e_1, \dots, e_N)$:
	\begin{subequations}\label{eq1pib4}
		\begin{align}
			\dot{\rho}&=-\boldsymbol h-(L\otimes I_q)(\rho+e)-\sigma,\label{eqpib4a}\\
			\dot{\sigma}&=(L\otimes I_q)(\rho+e)\label{eqpib4b}.
		\end{align}
	\end{subequations}
	
	Selecting the Lyapunov candidate as in (\ref{v1}), its derivative is
	\begin{align}
		\dot V = &-\xi\rho^\top \boldsymbol h - (\xi-1)\rho^\top (L\otimes I_q)\rho - \sigma^\top \sigma - \sigma^\top \boldsymbol h\notag\\
		&-(\xi-1)\rho^\top (L\otimes I_q)e+\xi\sigma^\top (\Pi\otimes I_q)e,\label{v2'}
	\end{align}
	where $\Pi$ is defined in Lemma \autoref{lemma2}.
	
	By the inequality $-(\xi-1)\rho^\top (L\otimes I_q)e
	\le \frac{\lambda_N^2}{4}\rho^\top \rho+(\xi-1)^2e^\top e$
	and
	$\xi\sigma^\top (\Pi\otimes I_q)e
	\le\frac{1}{4}\sigma^\top \sigma + \xi^2e^\top e$, 
	one has
	\begin{align}
		\dot V \le &-\xi\underline m \rho^\top \rho - \frac{1}{2}\sigma^\top \sigma + \frac{1}{2}\overline w^2\rho^\top \rho + \frac{\lambda_N^2}{4}\rho^\top \rho \notag\\
		&+(\xi-1)^2e^\top e + \frac{1}{4}\sigma^\top \sigma + \xi^2e^\top e \notag\\
		=&-p^\top F_2p-\frac{1}{8}(p^\top p-\epsilon^{-2}e^\top e),\label{v2'2}
	\end{align}
	where $p=col(\rho,\sigma)$, $\epsilon=\frac{1}{2\sqrt{2(\xi^2+(\xi-1)^2)}}$ and
	\begin{align}
		F_2:=
		\begin{pmatrix}
			(\xi\underline m-\frac{1}{2}\overline w^2-\frac{\lambda_N^2}{4}-\frac{1}{8})\otimes I_{Nq}& \boldsymbol0_{Nq}& \\
			\boldsymbol0_{Nq}& \frac{1}{8} I_{Nq}& \\
		\end{pmatrix}.\notag
	\end{align}
	
	For $\xi > \frac{4\overline w^2+2\lambda_N^2+1}{8\underline m}$, the matrix $F_2$ is positive definite.
	Inspired by \citet{tran2018}, let $q=\frac{\|e\|}{\|p\|}$, and then its derivative is
	\begin{align}
		\dot q = &\frac{\dot{\| e\|}\|p\|-\|e\|\dot{\| p\|}}{\|p\|^2} 
		\le\frac{(1+q)\dot{\| p\|}}{\|p\|}. \notag
	\end{align}
	
	Because
	\begin{align}
		\dot{\| p\|}\le &\overline w\|\rho\|+\|\sigma\|+\sqrt 2\lambda_N\|\rho\|+\sqrt 2\lambda_N\|e\| \notag\\
		\le&(\overline w + 1 + \sqrt 2\lambda_N)\|p\| + \sqrt 2\lambda_N\|e\|,\notag
	\end{align}
	we can conclude
	\begin{align}
		\dot q \le &(\overline w + 1 + \sqrt 2\lambda_N)(1+q) + \sqrt 2\lambda_N(1+q)q \notag\\
		=&(\overline w + 1)(1+q) + \sqrt 2\lambda_N(1+q)^2.\notag
	\end{align}
	
	By $e(t_k)=0$, one has $q(t_k)=0$. Choosing the differential equation
	\begin{align}
		\dot \mu = (\overline w + 1)(1+\mu) + \sqrt 2\lambda_N(1+\mu)^2,\label{diu}
	\end{align}
	the solution $\mu (t)$ of (\ref{diu}) with $\mu (0)=0$ is
	\begin{align}
		\mu (t) = \frac{(e^{(\overline w + 1)t}-1)(\overline w + 1 + \sqrt 2\lambda_N)}{\overline w + 1 + \sqrt 2\lambda_N - \sqrt 2\lambda_Ne^{(\overline w + 1)t}}.
	\end{align}
	
	Based on the  Lemma 3.4 in \citet{nonlinear}, the solution $q(t)$ satisfies $q(t)\le\mu(t), \forall t>0$. Thus, for $t_{k+1}-t_k\le\tau_0$, we have $q(t)\le\mu(\tau_0)=\epsilon$, in which $\tau_0$ is defined in (\ref{tau1}). Therefore, we get $p^\top p-\epsilon^{-2}e^\top e\ge0$.
	
	Note that $\underline \lambda_{F_2}$ is the smallest eigenvalues of $F_2$:
	\begin{align}
		\dot V \le -\underline \lambda_{F_2}p^\top p.
	\end{align}
	
	Finally, setting $\xi>\max\{1,\frac{4\overline w^2+2\lambda_N^2+1}{8\underline m}\}$, by Theorem 4.10 in \citet{nonlinear}, we can conclude the global exponential stability of system (\ref{eqpib2}).
\end{proof}

\subsection{Event-triggered Communication}

When the outputs of agents do not change too much, periodic communication schemes will transmit a lot of unnecessary data. So we introduce an event-triggered mechanism to further reduce the communication overhead.
\begin{theorem}\label{th3}
	Supposing that Assumptions \ref{asp1}-\ref{asp4} hold, for linear multi-agent system (\ref{eqsys}) with discrete control protocol (\ref{eqpib}), the problem (\ref{eqfun}) is solved and $y_i(t)$ converges to $y^*$ exponentially as $t\to\infty$ for $i=1,\dots,N$, if the communication instant is chosen as $t_k^{i+1}=t_k^i+ \max\{\tau_k^i, \Delta\}$, where
	\begin{align}
		\tau_k^i:=\inf_{t>t_i^k}\{t-t_i^k|\|e_i(t)\|^2=\frac{1}{4(d^{out}_i+\kappa)}\sum_{j=1}^{N} a_{ij}\|\hat y_i - \hat y_j \|^2 \},\label{tau2}
	\end{align}
	with the parameter $\kappa>\max\{\frac{\overline{w}^2}{4\underline m}, \frac{1}{2}\}$, and $y^*, K_{\alpha_i}, K_{\beta_i}, \Delta$ are the same as in \autoref{th2}.
\end{theorem}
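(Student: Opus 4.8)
The plan is to recycle the Theorem \ref{th2} computation as far as possible, since the control law (\ref{eqpib}) is literally the same here and only the sampling rule changes. The closed-loop error system, the Lyapunov candidate $V$ of (\ref{v1}), and the expression (\ref{v2'}) for $\dot V$ therefore all remain valid; the only new ingredient is that the measurement-error terms $-(\xi-1)\rho^\top(L\otimes I_q)e+\xi\sigma^\top(\Pi\otimes I_q)e$ must now be absorbed using the \emph{local} condition (\ref{tau2}) rather than the global time estimate behind (\ref{tau1}).

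The first step is to put (\ref{tau2}) into compact form. Multiplying the threshold by $4(d_i^{out}+\kappa)$, summing over $i$, and using the undirected identity $\sum_{i,j}a_{ij}\|\hat y_i-\hat y_j\|^2=2\hat\rho^\top(L\otimes I_q)\hat\rho$ (valid because $\hat y_i-\hat y_j=\hat\rho_i-\hat\rho_j$, as $\bar y$ is a consensus vector), I obtain the global bound $2\,e^\top((D^{out}+\kappa I_N)\otimes I_q)e\le\hat\rho^\top(L\otimes I_q)\hat\rho$. Since $e$ vanishes at every event and the condition stays active until the next one, this inequality holds along the trajectory whenever the event rule is the effective trigger; the companion estimate $e^\top(L\otimes I_q)e\le 2e^\top(D^{out}\otimes I_q)e$ also explains why the out-degree $d_i^{out}$ is built into the threshold.

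The second step is the Lyapunov estimate. I would rewrite the Laplacian part of (\ref{v2'}) around $\hat\rho=\rho+e$, producing the dissipative term $-(\xi-1)\hat\rho^\top(L\otimes I_q)\hat\rho$ together with cross terms in $e$; splitting those by Young's inequality, reusing strong convexity $\rho^\top\boldsymbol h\ge\underline m\|\rho\|^2$ and Lipschitzness $\|\boldsymbol h\|\le\overline w\|\rho\|$ exactly as in Theorem \ref{th2}, and feeding in the compact triggering bound to cancel the $D^{out}$-weighted error, I expect to reach $\dot V\le -c\,p^\top p$ for some $c>0$, so that Theorem 4.10 of \citet{nonlinear} delivers the exponential convergence of $y$ to $y^*$. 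The two thresholds $\kappa>\frac{\overline w^2}{4\underline m}$ and $\kappa>\frac12$ are precisely what should keep the resulting coefficients of $\|\rho\|^2$, $\|\sigma\|^2$ and $\|e\|^2$ simultaneously positive.

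The dwell time $\Delta$ then plays the remaining two roles. Because every inter-event interval has length at least $\Delta>0$, Zeno behaviour is excluded outright, which is the advertised improvement over \citet{li2020}. On intervals where the natural event time $\tau_k^i$ is smaller than $\Delta$, the local condition is briefly violated during the forced wait, and there I would fall back on $\Delta\le\tau_0$ together with the periodic estimate of Theorem \ref{th2}, which still gives $\|e\|\le\epsilon\|p\|$ and hence $\dot V<0$. I expect the main obstacle to be exactly this reconciliation: the agents reset asynchronously, so at a given instant the event bound governs some components of $e$ while the dwell/periodic bound governs others, and certifying a single uniform $\dot V\le -c\,p^\top p$ across this mixture---while landing on a condition on $\kappa$ that is free of $\xi$ and $\lambda_N$, rather than the $\xi^2$-scaled bound that a crude treatment of the $\sigma^\top(\Pi\otimes I_q)e$ term would produce---is the delicate part of the argument.
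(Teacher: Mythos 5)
Your proposal follows essentially the same route as the paper's proof: the same Lyapunov function and derivative (\ref{v2'}), a Young-type bound on $\xi\sigma^\top(\Pi\otimes I_q)e$, the graph identity $\sum_{i,j}a_{ij}\|\hat y_i-\hat y_j\|^2=2\hat y^\top(L\otimes I_q)\hat y$ together with $e^\top(L\otimes I_q)e\le 2e^\top(D^{out}\otimes I_q)e$ to convert the local trigger (\ref{tau2}) into a global bound, and a completion of the square around $\hat\rho=\rho+e$ so that the trigger cancels the $D^{out}$-weighted error. The one point you flag as the delicate remaining step --- that during a forced dwell interval (when $\tau_k^i<\Delta$) the inequality in (\ref{tau2}) may be violated and the asynchronous mixture of event-governed and dwell-governed agents must be reconciled --- is not addressed in the paper's proof either, which simply applies the triggering condition to conclude $\bar s\ge 0$ throughout.
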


\begin{proof}
	Selecting the Lyapunov candidate (\ref{v1}), its derivative can be calculated as in (\ref{v2'}).
	
	Carrying out a transformation as in the proof of \autoref{th2}, one can get
	\begin{align}
		\xi\sigma^\top (\Pi\otimes I_q)e \le \frac{\xi^2}{4(\xi-1)\kappa}\sigma^\top \sigma + (\xi-1)\kappa e^\top e,\label{zm5}
	\end{align}
	in which $\kappa\in \mathbb{R}$ is a parameter to be determined later.

	Substituting (\ref{zm1}), (\ref{zm2}), (\ref{zm5}) into (\ref{v2'}) and noting $p=col(\rho,\sigma)$, one has
	\begin{align}
		\dot V \le &-\xi\underline m\rho^\top \rho + \frac{\overline{w}^2}{2}\rho^\top \rho - \frac{1}{2}\sigma^\top \sigma  - \frac{\xi-1}{2}\rho^\top (L\otimes I_q)\rho\notag\\
		&- (\xi-1)\rho^\top (L\otimes I_q)e + \frac{\xi^2}{4(\xi-1)\kappa}\sigma^\top \sigma  + (\xi-1)\kappa e^\top e\notag\\
		\le&-p^\top F_3p-(\xi-1)\bar s,\notag
	\end{align}
	where 
	\begin{align}
		F_3:=
		\begin{pmatrix}
			(\xi\underline m-\frac{1}{2}\overline w^2)\otimes I_{Nq}& \boldsymbol0_{Nq}& \\
			\boldsymbol0_{Nq}& (\frac{1}{2}-\frac{\xi^2}{4(\xi-1)\kappa})\otimes I_{Nq}&\\
		\end{pmatrix},\notag
	\end{align}
	$\bar s := s - \kappa e^\top e$, and $s=\frac{1}{2}\rho^\top (L\otimes I_q)\rho+\rho^\top (L\otimes I_q)e$.
	
	By using $\boldsymbol1_N^\top L=\boldsymbol0_N^\top$ and $\rho=y-\bar y$, we get
	\begin{align}
		s =&\frac{1}{2}(y-\bar y)^\top (L\otimes I_q)(y-\bar y)+(y-\bar y)^\top (L\otimes I_q)e\notag\\
		=&\frac{1}{2}y^\top (L\otimes I_q)y+y^\top (L\otimes I_q)e\notag\\
		=&\frac{1}{2}\hat y^\top (L\otimes I_q)\hat y - \frac{1}{2}e^\top (L\otimes I_q)e. 
	\end{align}
	
	Because $L=D^{out}-A$ and $D^{out}+A\ge0$, we can get $\frac{1}{2}e^\top (L\otimes I_q)e \le e^\top (D^{out}\otimes I_q)e = \sum_{i=1}^{N}d^{out}_ie_i^\top e_i$.
	
	Because $\sum_{i=1}^{N}\sum_{j=1}^{N}a_{ij}(\hat y_i^\top \hat y_i - \hat y_j^\top \hat y_j)=0$, one has
	\begin{align}
		&\hat y^\top (L\otimes I_q)\hat y\notag\\
		=&\sum_{i=1}^{N}\sum_{j=1}^{N}a_{ij}\hat y_i^\top (\hat y_i-\hat y_j) - \frac{1}{2}\sum_{i=1}^{N} \sum_{j=1}^{N}a_{ij}(\hat y_i^\top \hat y_i - \hat y_j^\top \hat y_j)\notag\\
		=&\frac{1}{2}\sum_{i=1}^{N}\sum_{j=1}^{N} a_{ij}\left\|\hat y_i - \hat y_j \right\|^2.
	\end{align}
	Therefore, it can be obtained that
	\begin{align}
		s\ge \frac{1}{4}\sum_{i=1}^{N}\sum_{j=1}^{N} a_{ij}\left\|\hat y_i - \hat y_j \right\|^2 - \sum_{i=1}^{N}d^{out}_ie_i^\top e_i.
	\end{align}
	
	By the triggering conditions (\ref{tau2}), we get
	\begin{align}
		\bar s\ge &\sum_{i=1}^{N} \bigg(\frac{1}{4}\sum_{j=1}^{N} a_{ij}\|\hat y_i - \hat y_j \|^2 - (d^{out}_i+\kappa)\|e_i\|^2 \bigg)\notag\\
		\ge &0,
	\end{align}
	thereby implying
	\begin{align}
		\dot{V}\le -p^\top F_3p.
	\end{align}
	
	Choosing $\kappa>\max\{\frac{\overline{w}^2}{4\underline m}, \frac{1}{2}\}$, there must be a parameter $\xi>1$ that makes the matrix $F_3$ positive definite. Note that $\underline \lambda_{F_3}$ is the smallest eigenvalues of $F_3$:
	\begin{align}
		\dot V \le -\underline \lambda_{F_3}p^\top p.
	\end{align}
	
	By the Theorem 4.10 in \citet{nonlinear}, we can conclude the global exponential stability of this system.
\end{proof}

\begin{remark}
	In \autoref{th2} and \autoref{th3}, we have designed the discrete-time communication algorithms for the optimal output consensus of continuous heterogeneous linear multi-agent systems. Compared with the most related work \citep{li2020} which only proves asymptotic convergence and has gradually decreasing communication interval, our algorithms guarantee exponential convergence and clearly give a lower bound of the communication interval, thus excluding the Zeno behavior.
\end{remark}
\section{Simulation}\label{sim}


\begin{figure*}[!t]\centering
	\subfigure[Continuous Communication.] {\includegraphics[height=1.65in,width=2.3in,angle=0]{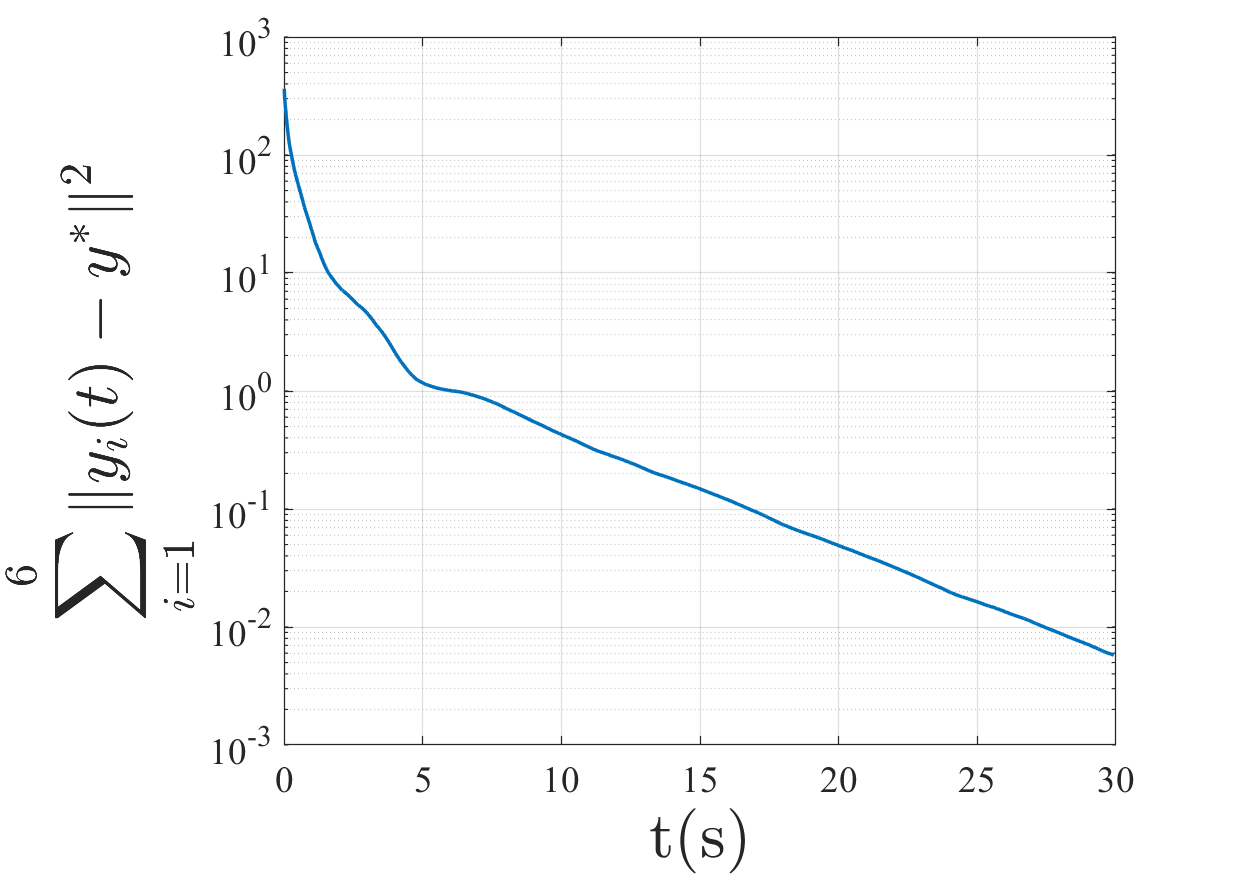}}\label{fig21}
	\subfigure[Periodic Communication.] {\includegraphics[height=1.65in,width=2.3in,angle=0]{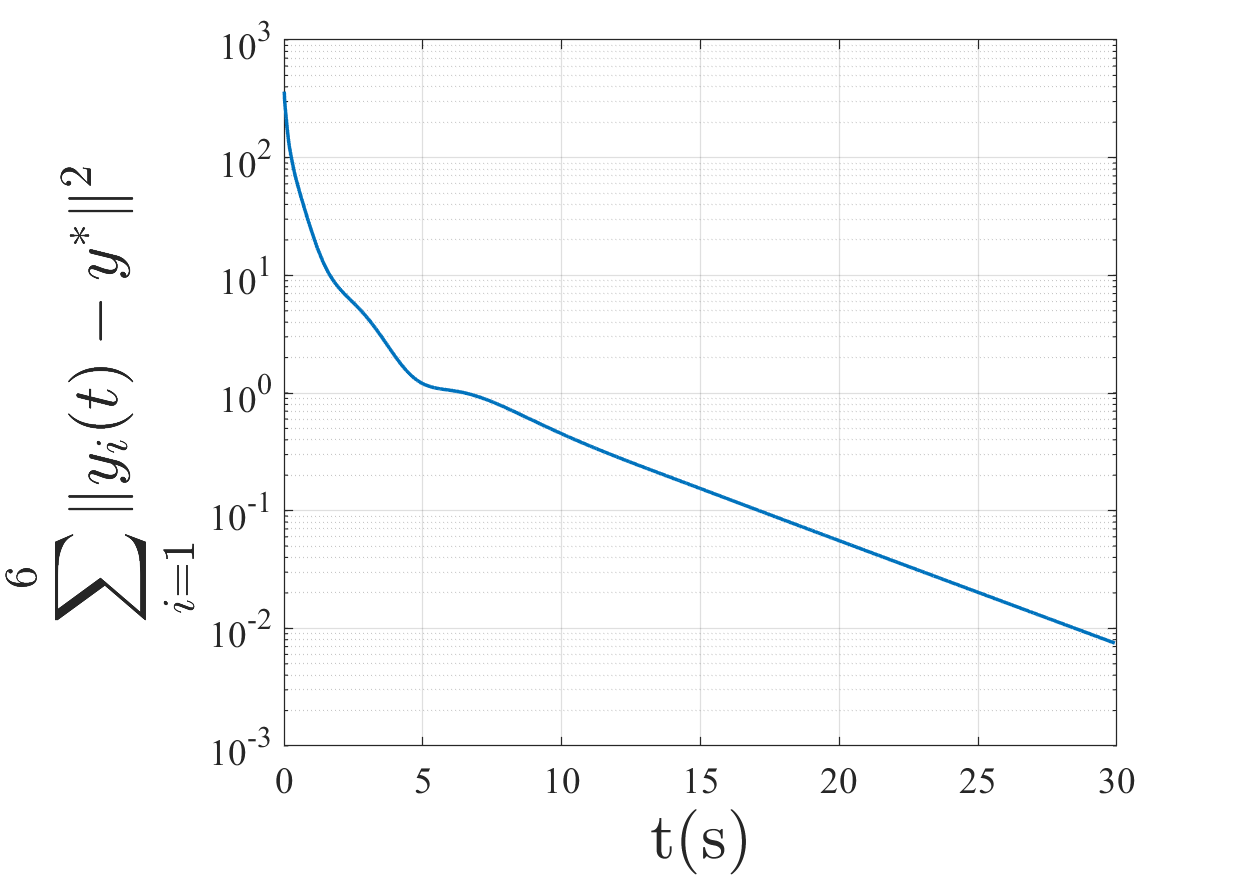}}
	\subfigure[Event-triggered Communication.] {\includegraphics[height=1.65in,width=2.3in,angle=0]{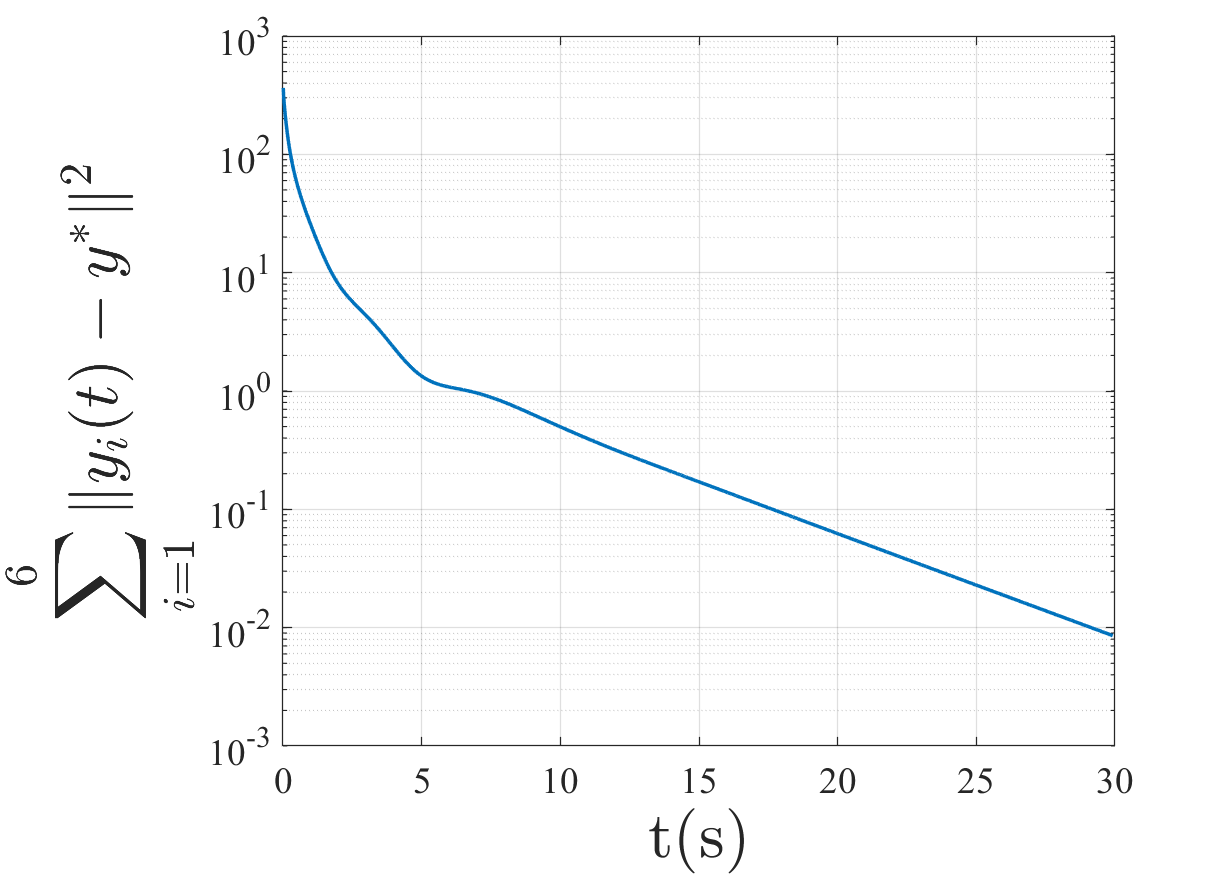}}
	\caption{ The outputs error $\sum_{i=1}^{6}\|y_i(t)-y^*\|^2$.}\label{pic3}	
\end{figure*}

\begin{figure*}[!t]\centering
	\subfigure[Periodic Communication.] {\includegraphics[height=1.8in,width=2.85in,angle=0]{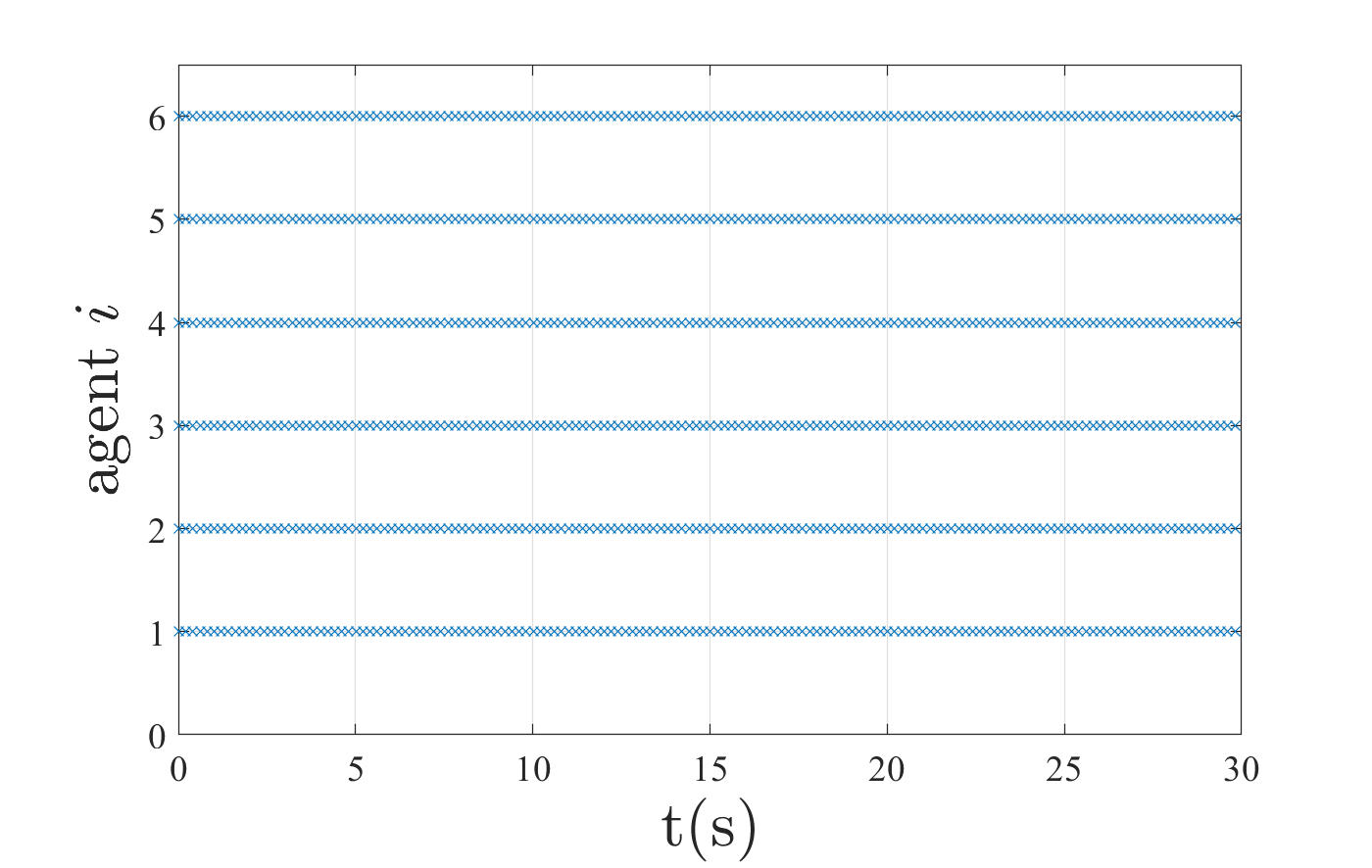}}
	\subfigure[Event-triggered Communication.] {\includegraphics[height=1.8in,width=2.85in,angle=0]{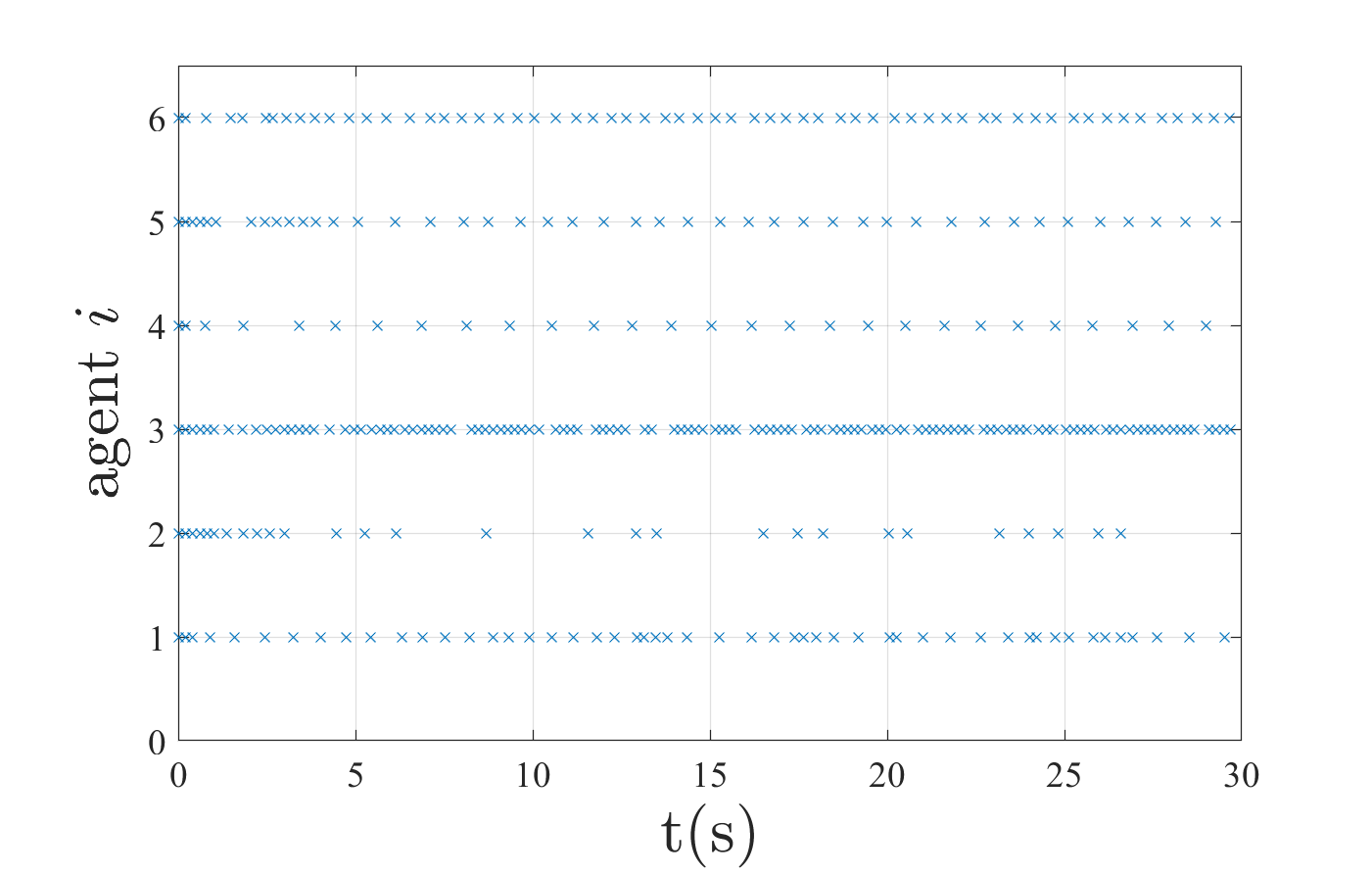}}
	\caption{ Triggering instants of six agents. }\label{pic4}
\end{figure*}

\begin{example}
	Consider a network of six agents, where $A_{1,2}$ $=$ $[1,0;0,1]$, $A_{3,4}$ $=$ $[0,1;-2,1]$, $A_{5,6}$ $=$ $[1,1,0;0,1,1;1,0,1]$, $B_{1,2}$ $=$ $[0,1;1,-2]$, $B_{3,4}$ $=$ $[1,1;1,0]$, $B_{5.6}$ $=$ $[1,0;0,1;2,0]$, $C_{1,2}$ $=$ $[3,0;0,1]$, $C_{3,4}$ $=$ $[2,2;-1,1]$, $C_{5,6}$ $=$ $[1,-1,2;1,2,2]$. The local objective functions are as follows
	with   decision variable $y=(y_a,y_b)^\top \in \mathbb{R}^2$ :
	\begin{equation}\label{sf}
	\begin{array}{lll}
	f_1(y)=(y_a-5)^2+2(y_b-3)^2;\\
	f_2(y)=(2y_a+5y_b-9)^2+\frac{0.2y_a^2}{\sqrt{2y_a^2+2}};\\
	f_3(y)=\ln(e^{0.1y_a}+e^{0.1y_b});\\
	f_4(y)=(2y_a+1)^2+2(y_b-1)^2;\\
	f_5(y)=(y_a+y_b)^2+\ln(y_b+3);\\
	f_6(y)=\lVert y\rVert^2+y_a+y_b.\notag
	\end{array}
	\end{equation}
	
	The communication network among these agents is depicted as \autoref{pic6} with all the edge weights as $1$.

	\begin{figure}[htbp]
		\centering
		\includegraphics[width=0.2\textwidth]{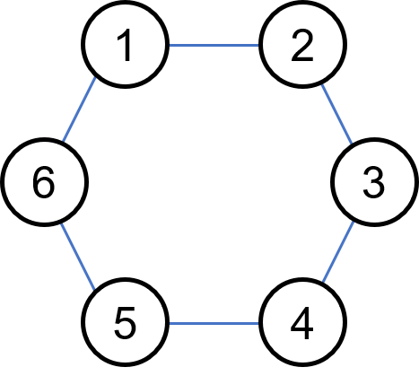} 
		\caption{Communication network among six agents.}
		\label{pic6}
	\end{figure}
	
	It can be verified that Assumptions \ref{asp1}$-$\ref{asp4} hold. And we can calculate $y^*=(0.26224, 1.59614)^\top $ by minimizing the global cost function $f(y)=\sum_{i=1}^{6}f_i(y)$.
	
	The parameters of each agent can be selected by the proposed algorithms, where $K_{\alpha_{1,2}}$ $=$ $[2,1;1,0]$, $K_{\alpha_{3,4}}$ $=$ $[2,-1;-2,2]$, $K_{\alpha_{5,6}}$ $=$ $[0.6,0.2,0.4;0,1,1]$, $K_{\beta_{1,2}}$ $=$ $[0.667,1;0.333,0]$, $K_{\beta_{3,4}}$ $=$ $[0.25,0.5;0,-1]$, $K_{\beta_{5,6}}$ $=$ $[0.133,0.0667;-0.333,0.333]$, and $\Delta=0.2$. The initial values $x_i(0)$ are randomly selected in $[-10, 10]$.
	
	\autoref{pic3} depicts the optimization errors $\sum_{i=1}^{6}\|y_i(t)-y^*\|^2$ with three control laws of continuous, periodic, and event-triggered communication respectively. It can be seen that the outputs of all agents converge to the optimal value $y^*$ exponentially.
	\autoref{pic4} shows the triggering instants of six agents with periodic and event-triggered communication control laws, from which we can observe that the communication among six agents is discrete and neither of them exhibits Zeno behavior. Compared with the periodic communication control law, the event-triggered mechanism can further reduce communication overhead.
\end{example}

\begin{example}
	In order to verify the convergence speed of our algorithm, we compare it with the most related work \citep{li2020} in the case of continuous communication. For convenience, we adopt the linear system parameters, objective functions and communication network that are consistent with the simulation in \citet{li2020}. The parameters of our algorithm are selected as $K_{\alpha_{1,2}}$ $=$ $[0,1;0,0]$, $K_{\alpha_{3,4}}$ $=$ $[0.5,-0.5;0,0]$, $K_{\alpha_{5,6}}$ $=$ $[0.25,1,-1.5;0,0,0]$, $K_{\beta_{1,2}}$ $=$ $[1;0]$, $K_{\beta_{3,4}}$ $=$ $[0.5;0]$, $K_{\beta_{5,6}}$ $=$ $[0.5;0]$. And the parameters of the algorithm in \citet{li2020} are selected to be the default values there. The initial values $x_i(0)$ are randomly selected in $[-50, 50]$.
	
	\begin{figure}[htbp]
		\centering
		\includegraphics[width=0.4\textwidth]{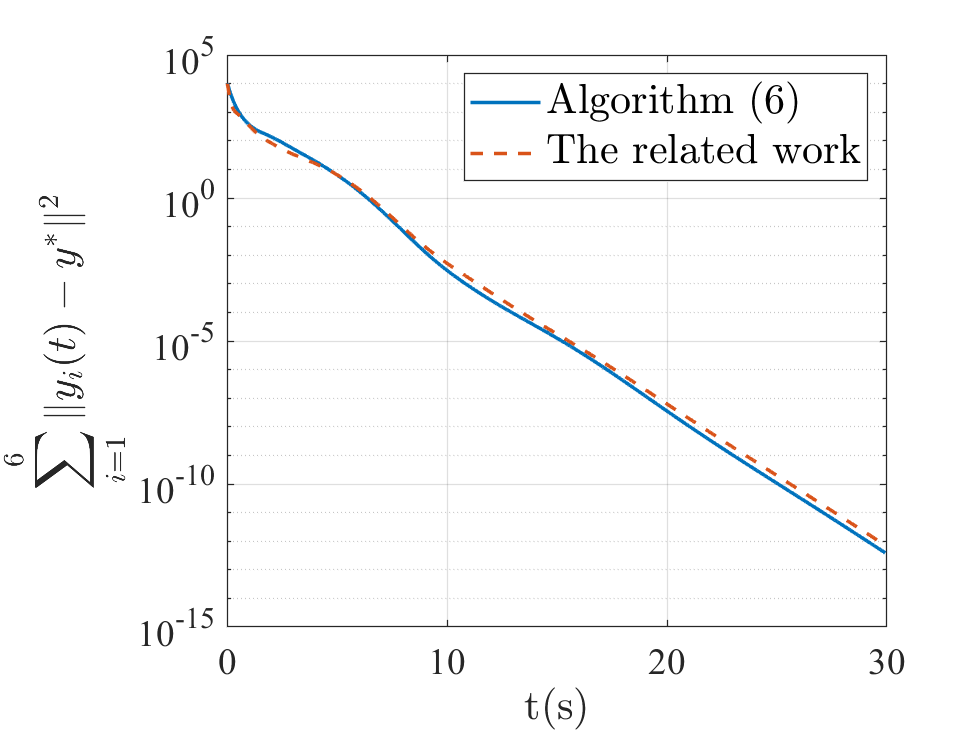} 
		\caption{The average outputs error.}
		\label{pic5}
	\end{figure}
	
	By taking the average of 20 runs, \autoref{pic5} shows that the two algorithms have a similar convergence, which both converge exponentially, while there is no analysis of convergence speed provided in \citet{li2020}.
\end{example}

\section{Results}\label{res}
This paper has investigated the optimal output consensus problem for heterogeneous linear multi-agent systems. A proportional-integral (PI) control law has been proposed, which can converge to the optimal solution exponentially. The proposed continuous algorithm does not require any global information, so it is fully distributed. Then, in order to avoid continuous communication among agents, the algorithm has been extended to periodic and event-triggered communication schemes. It was shown that the global exponential convergence is preserved and no Zeno behavior is exhibited.

Future works include extending the algorithms to the case of unbalanced directed and time-varying networks.

\bibliographystyle{elsarticle-harv}
\bibliography{bibfile}           

\begin{thebibliography}{20}
\expandafter\ifx\csname natexlab\endcsname\relax\def\natexlab#1{#1}\fi
\providecommand{\url}[1]{\texttt{#1}}
\providecommand{\href}[2]{#2}
\providecommand{\path}[1]{#1}
\providecommand{\DOIprefix}{doi:}
\providecommand{\ArXivprefix}{arXiv:}
\providecommand{\URLprefix}{URL: }
\providecommand{\Pubmedprefix}{pmid:}
\providecommand{\doi}[1]{\href{http://dx.doi.org/#1}{\path{#1}}}
\providecommand{\Pubmed}[1]{\href{pmid:#1}{\path{#1}}}
\providecommand{\bibinfo}[2]{#2}
\ifx\xfnm\relax \def\xfnm[#1]{\unskip,\space#1}\fi
\bibitem[{Ding et~al.({2018})Ding, Han, Ge and Zhang}]{ding2018}
\bibinfo{author}{Ding, L.}, \bibinfo{author}{Han, Q.L.}, \bibinfo{author}{Ge,
  X.}, \bibinfo{author}{Zhang, X.M.}, \bibinfo{year}{{2018}}.
\newblock \bibinfo{title}{An overview of recent advances in event-triggered
  consensus of multi-agent systems}.
\newblock \bibinfo{journal}{{IEEE Transactions on Cybernetics}}
  \bibinfo{volume}{{48}}, \bibinfo{pages}{{1110--1123}}.
\bibitem[{Ge et~al.(2020)Ge, Han, Ding, Wang and Zhang}]{ge2020}
\bibinfo{author}{Ge, X.}, \bibinfo{author}{Han, Q.L.}, \bibinfo{author}{Ding,
  L.}, \bibinfo{author}{Wang, Y.L.}, \bibinfo{author}{Zhang, X.M.},
  \bibinfo{year}{2020}.
\newblock \bibinfo{title}{Dynamic event-triggered distributed coordination
  control and its applications: A survey of trends and techniques}.
\newblock \bibinfo{journal}{IEEE Transactions on Systems, Man, and Cybernetics:
  Systems} \bibinfo{volume}{50}, \bibinfo{pages}{3112--3125}.
\bibitem[{Godsil and Royle(2001)}]{tu}
\bibinfo{author}{Godsil, C.}, \bibinfo{author}{Royle, G.F.},
  \bibinfo{year}{2001}.
\newblock \bibinfo{title}{Algebraic Graph Theory}.
\newblock \bibinfo{publisher}{New York, NY, USA: Springer}.
\bibitem[{Khalil and Grizzle(2002)}]{nonlinear}
\bibinfo{author}{Khalil, H.K.}, \bibinfo{author}{Grizzle, J.W.},
  \bibinfo{year}{2002}.
\newblock \bibinfo{title}{Nonlinear Systems}. volume~\bibinfo{volume}{3}.
\newblock \bibinfo{publisher}{Prentice hall Upper Saddle River, NJ}.
\bibitem[{Kia et~al.({2015})Kia, Cortes and Martinez}]{kia2015}
\bibinfo{author}{Kia, S.S.}, \bibinfo{author}{Cortes, J.},
  \bibinfo{author}{Martinez, S.}, \bibinfo{year}{{2015}}.
\newblock \bibinfo{title}{Distributed convex optimization via continuous-time
  coordination algorithms with discrete-time communication}.
\newblock \bibinfo{journal}{{Automatica}} \bibinfo{volume}{{55}},
  \bibinfo{pages}{{254--264}}.
\bibitem[{Li et~al.({2019}a)Li, Chen and Su}]{lxx2019}
\bibinfo{author}{Li, X.}, \bibinfo{author}{Chen, M.Z.Q.}, \bibinfo{author}{Su,
  H.}, \bibinfo{year}{{2019}}a.
\newblock \bibinfo{title}{Quantized consensus of multi-agent networks with
  sampled data and markovian interaction links}.
\newblock \bibinfo{journal}{{IEEE Transactions on Cybernetics}}
  \bibinfo{volume}{{49}}, \bibinfo{pages}{{1816--1825}}.
\bibitem[{Li et~al.({2019}b)Li, Xie and Hong}]{li20192}
\bibinfo{author}{Li, X.}, \bibinfo{author}{Xie, L.}, \bibinfo{author}{Hong,
  Y.}, \bibinfo{year}{{2019}}b.
\newblock \bibinfo{title}{Distributed continuous-time algorithm for a general
  nonsmooth monotropic optimization problem}.
\newblock \bibinfo{journal}{{International Journal of Robust and Nonlinear
  Control}} \bibinfo{volume}{{29}}, \bibinfo{pages}{{3252--3266}}.
\bibitem[{Li et~al.({2020}a)Li, Xie and Hong}]{li20191}
\bibinfo{author}{Li, X.}, \bibinfo{author}{Xie, L.}, \bibinfo{author}{Hong,
  Y.}, \bibinfo{year}{{2020}}a.
\newblock \bibinfo{title}{Distributed continuous-time nonsmooth convex
  optimization with coupled inequality constraints}.
\newblock \bibinfo{journal}{{IEEE Transactions on Control of Network Systems}}
  \bibinfo{volume}{{7}}, \bibinfo{pages}{{74--84}}.
\bibitem[{Li et~al.({2020}b)Li, Wu, Li and Ding}]{li2020}
\bibinfo{author}{Li, Z.}, \bibinfo{author}{Wu, Z.}, \bibinfo{author}{Li, Z.},
  \bibinfo{author}{Ding, Z.}, \bibinfo{year}{{2020}}b.
\newblock \bibinfo{title}{Distributed optimal coordination for heterogeneous
  linear multiagent systems with event-triggered mechanisms}.
\newblock \bibinfo{journal}{{IEEE Transactions on Automatic Control}}
  \bibinfo{volume}{{65}}, \bibinfo{pages}{{1763--1770}}.
\bibitem[{Nedic and Ozdaglar({2009})}]{nedic2009}
\bibinfo{author}{Nedic, A.}, \bibinfo{author}{Ozdaglar, A.},
  \bibinfo{year}{{2009}}.
\newblock \bibinfo{title}{Distributed subgradient methods for multi-agent
  optimization}.
\newblock \bibinfo{journal}{IEEE Transactions on Automatic Control}
  \bibinfo{volume}{{54}}, \bibinfo{pages}{{48--61}}.
\bibitem[{Qiu et~al.(2019)Qiu, Xie and You}]{qiu2019}
\bibinfo{author}{Qiu, Z.}, \bibinfo{author}{Xie, L.}, \bibinfo{author}{You,
  K.}, \bibinfo{year}{2019}.
\newblock \bibinfo{title}{Feedback-feedforward control approach to distributed
  optimization}, in: \bibinfo{booktitle}{American Control Conference (ACC)},
  pp. \bibinfo{pages}{1412--1417}.
\bibitem[{Tang et~al.({2019})Tang, Deng and Hong}]{tang2019}
\bibinfo{author}{Tang, Y.}, \bibinfo{author}{Deng, Z.}, \bibinfo{author}{Hong,
  Y.}, \bibinfo{year}{{2019}}.
\newblock \bibinfo{title}{Optimal output consensus of high-order multi-agent
  systems with embedded technique}.
\newblock \bibinfo{journal}{{IEEE Transactions on Cybernetics}}
  \bibinfo{volume}{{49}}, \bibinfo{pages}{{1768--1779}}.
\bibitem[{Tran et~al.(2019)Tran, Wang, Liu, Xiao and Lei}]{tran2018}
\bibinfo{author}{Tran, N.T.}, \bibinfo{author}{Wang, Y.W.},
  \bibinfo{author}{Liu, X.K.}, \bibinfo{author}{Xiao, J.W.},
  \bibinfo{author}{Lei, Y.}, \bibinfo{year}{2019}.
\newblock \bibinfo{title}{Distributed optimization problem for second-order
  multi-agent systems with event-triggered and time-triggered communication}.
\newblock \bibinfo{journal}{{Journal of the Franklin Institute}}
  \bibinfo{volume}{{356}}, \bibinfo{pages}{{10196--10215}}.
\bibitem[{Wu et~al.(2020)Wu, Li, Ding and Li}]{wu2020}
\bibinfo{author}{Wu, Z.}, \bibinfo{author}{Li, Z.}, \bibinfo{author}{Ding, Z.},
  \bibinfo{author}{Li, Z.}, \bibinfo{year}{2020}.
\newblock \bibinfo{title}{Distributed continuous-time optimization with
  scalable adaptive event-based mechanisms}.
\newblock \bibinfo{journal}{IEEE Transactions on Systems, Man, and Cybernetics:
  Systems} \bibinfo{volume}{50}, \bibinfo{pages}{3252--3257}.
\bibitem[{Yang et~al.({2019})Yang, Yi, Wu, Yuan, Wu, Meng, Hong, Wang, Lin and
  Johansson}]{yang2019}
\bibinfo{author}{Yang, T.}, \bibinfo{author}{Yi, X.}, \bibinfo{author}{Wu, J.},
  \bibinfo{author}{Yuan, Y.}, \bibinfo{author}{Wu, D.}, \bibinfo{author}{Meng,
  Z.}, \bibinfo{author}{Hong, Y.}, \bibinfo{author}{Wang, H.},
  \bibinfo{author}{Lin, Z.}, \bibinfo{author}{Johansson, K.H.},
  \bibinfo{year}{{2019}}.
\newblock \bibinfo{title}{A survey of distributed optimization}.
\newblock \bibinfo{journal}{Annual Reviews in Control} \bibinfo{volume}{{47}},
  \bibinfo{pages}{{278--305}}.
\bibitem[{Yu and Chen(2020)}]{yu2020}
\bibinfo{author}{Yu, H.}, \bibinfo{author}{Chen, T.}, \bibinfo{year}{2020}.
\newblock \bibinfo{title}{On {Z}eno behavior in event-triggered finite-time
  consensus of multi-agent systems}.
\newblock \bibinfo{journal}{IEEE Transactions on Automatic Control}
  \bibinfo{note}{Doi:{10.1109/TAC.2020.3030758}}.
\bibitem[{Zhang et~al.(2020)Zhang, Liu and Ji}]{zhang2020}
\bibinfo{author}{Zhang, J.}, \bibinfo{author}{Liu, L.}, \bibinfo{author}{Ji,
  H.}, \bibinfo{year}{2020}.
\newblock \bibinfo{title}{Exponential convergence of distributed optimal
  coordination for linear multi-agent systems over general digraphs}, in:
  \bibinfo{booktitle}{39th Chinese Control Conference (CCC)}, pp.
  \bibinfo{pages}{5047--5051}.
\bibitem[{Zhang et~al.(2018)Zhang, Papachristodoulou and Li}]{lina2017}
\bibinfo{author}{Zhang, X.}, \bibinfo{author}{Papachristodoulou, A.},
  \bibinfo{author}{Li, N.}, \bibinfo{year}{2018}.
\newblock \bibinfo{title}{Distributed control for reaching optimal steady state
  in network systems: An optimization approach}.
\newblock \bibinfo{journal}{IEEE Transactions on Automatic Control}
  \bibinfo{volume}{63}, \bibinfo{pages}{864--871}.
\bibitem[{Zhang et~al.({2017})Zhang, Deng and Hong}]{zhang2017}
\bibinfo{author}{Zhang, Y.}, \bibinfo{author}{Deng, Z.}, \bibinfo{author}{Hong,
  Y.}, \bibinfo{year}{{2017}}.
\newblock \bibinfo{title}{Distributed optimal coordination for multiple
  heterogeneous euler-lagrangian systems}.
\newblock \bibinfo{journal}{Automatica} \bibinfo{volume}{{79}},
  \bibinfo{pages}{{207--213}}.
\bibitem[{Zhao et~al.({2017})Zhao, Liu, Wen and Chen}]{zhao2017}
\bibinfo{author}{Zhao, Y.}, \bibinfo{author}{Liu, Y.}, \bibinfo{author}{Wen,
  G.}, \bibinfo{author}{Chen, G.}, \bibinfo{year}{{2017}}.
\newblock \bibinfo{title}{Distributed optimization for linear multi-agent
  systems: Edge- and node-based adaptive designs}.
\newblock \bibinfo{journal}{{IEEE Transactions on Automatic Control}}
  \bibinfo{volume}{{62}}, \bibinfo{pages}{{3602--3609}}.

\end{thebibliography}

\end{document}